\newtheoremstyle{colon}%
{}{}{\itshape}{}{\bfseries}{.}{ }{\thmname{#1}\thmnumber{ #2}\thmnote{\textbf{ (#3)}}}
\theoremstyle{colon}
\newtheorem{theorem}{Theorem}
\newtheorem{lemma}[theorem]{Lemma}
\newtheorem{coro}[theorem]{Corollary}
\newtheorem{definition}[theorem]{Definition}
\newtheorem{proposition}[theorem]{Proposition}
\DeclareSymbolFontAlphabet{\mathbbl}{bbold}
\DeclareSymbolFontAlphabet{\mathbbm}{bbold}
\DeclareSymbolFontAlphabet{\mathbb}{AMSb}%
\newcommand{\corurl}{BrickRed}  \newcommand{\corcite}{red}
\newcommand{\corlink}{blue}    \newcommand{\corfile}{black}
\def\ledgee{{\setbox0\hbox{\ensuremath{\mathrel{\cdot}}}\rlap{\hbox to \wd0{\hss\ensuremath\wedge\hss}}\box0}}
\newcommand\rrule[3][0pt]{%
	\ifdim#2>#3\math@hrule[#1]{#2}{#3}\else\math@vrule[#1]{#2}{#3}\fi}
\newcommand\math@hrule[3][0pt]{%
	\gdef\mystery@factor{0.07}%
	\@tempdima=#3%
	\rule[#1]{0pt}{#3}
	\raisebox{.5\@tempdima+#1}{%
		\makebox[#2][l]{\kern-.5\@tempdima\@@mathrule{#2}{#3}}}%
}
\newcommand\math@vrule[3][0pt]{%
	\gdef\mystery@factor{0.0}%
	\@tempdima=#2%
	\rule[#1]{0pt}{#3}
	\raisebox{-.0\@tempdima+#1}{%
		\kern0.5\@tempdima%
		\rotatebox{90}{\kern-0.5\@tempdima\makebox[#3][l]{\@@mathrule{#3}{#2}}}%
		\kern0.5\@tempdima}%
}
\def\@@mathrule#1#2{%
	\@tempdimb=#2%
	\@tempdima=\dimexpr#1-\mystery@factor\@tempdimb
	\pdfliteral{%
		q []0 d %
		1 J 
		\strip@pt\@tempdimb\space w \strip@pt\@tempdimb\space 0 m %
		\strip@pt\@tempdima\space 0 l S Q }}
\def\wwedgee{{\setbox0\hbox{\ensuremath{\mathrel{\wedge}}}\rlap{\hbox to \wd0{\hss\,\ensuremath\wedge\hss}}\box0}}
\def\ledgee{{\setbox0\hbox{\ensuremath{\mathrel{\cdot}}}\rlap{\hbox to \wd0{\hss\ensuremath\wedge\hss}}\box0}}
\newcommand{\dd}{\mathchoice
	{\mathbbm{d}\rrule{.087ex}{1.605ex}\hspace*{0.15ex}} 
	{\mathbbm{d}\rrule{.087ex}{1.605ex}\hspace*{0.15ex}} 
	{\mathbbm{d}\rrule{.08ex}{1.125ex}\hspace*{0.15ex}}  
	{\mathbbm{d}\rrule{.06ex}{.8ex}\hspace*{0.15ex}}     
}
\renewcommand{\d}{\mathrm{d}}
\newcommand{\st}{\ \Big\rvert \ }
\newlength{\alturad}\settoheight{\alturad}{$\d$}
\newcommand{\ww}{\wedge\hspace*{-5mm}\wedge\,\,}
\newcommand{\sdots}{\!\cdot\!\cdot\!\cdot\!}
\titlespacing\section{0pt}{10pt plus 4pt minus 2pt}{1pt plus 2pt minus 2pt}
\titlespacing\subsection{0pt}{12pt plus 4pt minus 2pt}{1pt plus 2pt minus 2pt}
\titlespacing\subsubsection{0pt}{12pt plus 4pt minus 2pt}{0pt plus 2pt minus 2pt}
\newcommand{\Ev}{\mathrm{E}\hspace*{-.3pt}\mathrm{v}}
\begin{document}

\title{Poisson brackets in Sobolev spaces: a mock holonomy-flux algebra}

\author{J Fernando Barbero G$^{1, 2}$, Marc Basquens$^3$, Bogar Díaz$^{2, 3}$, and Eduardo J S Villase\~nor$^{2, 3}$}
\address{$^1$ Instituto de Estructura de la Materia, CSIC, Serrano 123, 28006 Madrid, Spain}
\address{$^2$ Grupo de Teor\'{\i}as de Campos y F\'{\i}sica Estad\'{\i}stica, Instituto Gregorio Mill\'an (UC3M), Unidad Asociada al Instituto de Estructura de la Materia, CSIC}
\address{$^3$ Departamento de Matem\'aticas, Universidad Carlos III de Madrid, Avda.\  de la Universidad 30, 28911 Legan\'es, Spain}
\ead{fbarbero@iem.cfmac.csic.es, marc.basquens.munoz@gmail.com, bodiazj@math.uc3m.es, ejsanche@math.uc3m.es}


\begin{abstract}
The purpose of this paper is to discuss a number of issues that crop up in the computation of Poisson brackets in field theories. This is specially important for the canonical approaches to quantization and, in particular, for loop quantum gravity. We illustrate the main points by working out several examples. Due attention is paid to relevant analytic issues that are unavoidable in order to properly understand how computations should be carried out. Although the functional spaces that we use throughout the paper will likely have to be modified in order to deal with specific physical theories such as general relativity, many of the points that we will raise will also be relevant in that context. The specific example of the \textit{mock holonomy-flux algebra} will be considered in some detail and used to draw some conclusions regarding the loop quantum gravity formalism.
\end{abstract}

%
%
%
%
%
%
%
\section{Introduction}{\label{sec_defs}}

Loop quantum gravity (LQG) is one of the leading approaches to the quantization of general relativity (GR). In the last three decades it has blossomed into a multi-pronged line of research encompassing Hamiltonian (``$3+1$'' or ``canonical'') methods and covariant ideas (spin foams). The starting point of LQG was the Hamiltonian description of GR in the phase space of an $SU(2)$ Yang-Mills theory found by Ashtekar \cite{Ashtekar1,Ashtekar2}. In marked contrast with the ADM \mbox{geometrodynamical} formalism---where the basic field is the metric---this formulation relies on \textit{connections} as the basic configuration gravitational variables. This makes it possible to import many ideas and techniques, developed for the gauge theories used in particle physics, to deal with the difficult problem of quantizing gravity.

One of the pillars of canonical LQG is the so-called \textit{holonomy-flux algebra} introduced by Ashtekar, Corichi, and  Zapata in \cite{ACZ}, which replaced the original $T$-algebra proposed by Rovelli and Smolin in \cite{HFA}. The holonomy-flux algebra is defined by two types of basic phase space functions, respectively associated with curves and surfaces in a 3-dimensional differential manifold $\Sigma$. The Poisson brackets of these objects play an essential role in LQG because the first step in the quantization process is to represent the algebra defined by them in a suitable Hilbert space. As often argued in the literature (see, for instance, \cite{ACZ}) this is easier said than done because the actual computation of the Poisson brackets is somewhat problematic. For instance, the \textit{distributional character} of the holonomies and fluxes seems to lead to inconsistencies, such as violations of the Jacobi identity, if the fluxes are assumed to Poisson-commute as they apparently should. The root of the problem is that the fluxes and holonomies are not differentiable functions in phase space and, hence, a direct computation of their Poisson brackets makes no sense. This problem can be circumvented in two ways: i) by defining the Poisson brackets of the fluxes and holonomies with the help of differentiable functions (a regularization process) \cite{Thiemann1, CattaPer} or  ii) by associating the fluxes to (singular) vector fields \cite{ACZ}. The algebra obtained by these methods has become the foundation of LQG in its canonical incarnation (see \cite{Thiemann1} for a careful discussion on how the modified algebra of classical basic functions can be obtained by a precise regularization process).

According to their definition, the fluxes depend only on one of the canonical variables (the momentum associated with the connection) and, hence, their Poisson brackets should be zero. However, the modified algebra mentioned above changes this: the fluxes do not Poisson-commute. This seemingly contradictory fact has been justified in \cite{CattaPer} by invoking the necessity of taking into account the Gauss law (the constraint that generates internal $SU(2)$ transformations) and the possibility of working with differentiable basic variables which coincide with the fluxes on the phase space submanifold defined by the constraints, but depend both on connections and momenta.
In fact, as discussed in \cite{Thiemann1, Freidel_2013}, by relying on covariance arguments it is possible to argue that the correct definition of the fluxes must involve \textit{both} configuration variables and momenta. Of course, a fruitful and useful attitude is just to postulate the desired algebra, check its consistency and carry on.


The central goal of the present paper is to highlight the fact that the actual computation of Poisson brackets cannot be properly done without paying attention to some mathematical issues. For instance, it only makes sense to compute the Poisson brackets of \textit{differentiable functions} (as remarked in \cite{CattaPer}). Of course, to properly talk about differentiability it is necessary to work in the appropriate mathematical setting and, as we will argue later, with a clear definition of this concept. Furthermore, in order for the Jacobi identity to make sense, the Poisson brackets of the relevant phase space functions appearing in it (some of them Poisson brackets themselves) \textit{must also be differentiable}.  The appearance of distributions in the basic Poisson brackets of canonical variables in field  theories must be properly understood (beyond the simple use of smearings). Finally, the properties of the symplectic form$\Omega$, in particular the fact that it is \textit{non-degenerate} and \textit{closed}, guarantee that the Jacobi identity will hold if it can be computed. In other words, if the Poisson brackets are defined and are smooth enough, \textit{they will always satisfy the Jacobi identity}.

In order to discuss the previous points, we will work out in detail a number of examples. In all the cases, we will introduce the necessary mathematical structures, but no more. Our purpose is to provide some sample computations that highlight the main points without unnecessary complications. Specifically, we will take care of functional analytic and topological issues by working always in Sobolev spaces (in fact, $W^{s,2}$ Hilbert spaces involving $s$ derivatives). We will consider functions defined on spatial manifolds $\Sigma$ of different dimensions with or without boundary. When considering boundaries we will carefully show how (and if) differentiability is affected by them. Functional derivatives will be carefully introduced, in particular, the expressions for Poisson brackets in terms of these objects will be discussed in detail. We will define several phase space functions and show how to compute their Poisson brackets. Among these functions, evaluations will play an important role. Finally, we will introduce \textit{mock holonomy-flux variables} and study their algebra.

We want to emphasize, from the beginning, that we are not claiming that the functional framework that we employ in this paper can be used or adapted to solve the problem of quantizing gravity (we do not know how to do it at this point and we cannot exclude the possibility that it is impossible). Our intention is just to highlight a number of basic mathematical points in order to shed light on some issues that have been a matter of contention and, in our opinion, a source of confusion.

The structure of the paper is the following. After this introduction, we  provide the minimal mathematical background appropriate for our purposes in Section \ref{sec_mat_background}. Several sample functional spaces  that we will use throughout the paper will be described in Section \ref{sec_sample_spaces}. We will illustrate some interesting points regarding differentiability and the computation of Poisson brackets, both in manifolds with and without boundary, in Section \ref{sec_one_dim_examples}. The mock holonomy flux algebra will be discussed in Section \ref{sec_mock_HFalgebra}. We end the paper with a short discussion of the consequences of our results and some comments in \ref{sec_conclusions}. Some derivations have been relegated to the Appendices. The notation used throughout the paper is standard.

%
%
\section{Mathematical background}{\label{sec_mat_background}}

The configuration space for a field theory consists of functions defined on a differential manifold $\Sigma$ that plays the role of space. These functions are usually subject to regularity conditions, required in part by the fact that they have to satisfy dynamical equations involving differential operators. In addition, it may be necessary to introduce some topological structure in the configuration space in order to have proper notions of continuity and smoothness. Although, in physics, a traditional and often justifiable attitude is that of assuming that the fields are \textit{as nice as needed} to guarantee that the mathematical expressions where they appear make sense, it is often necessary to phrase these requirements explicitly.

\subsection{Functional spaces}{\label{subsec_funct}}

The configuration spaces that we will use in the paper will be Sobolev spaces. Strictly speaking their elements are not functions but, rather, equivalence classes of functions, so we will work with a slight generalization of the physical concept of field. Although this will force us to be careful at times, no major difficulties will be encountered. Essentially all the following definitions, theorems, and their proofs can be found in \cite{Brezis}.

\medskip

\begin{definition}
Let $I$ be an open interval of the real line $\mathbb{R}$, bounded or unbounded, and let $C_c^1(I)$ denote the space of continuously differentiable real functions with compact support on $I$ (test functions). We define
\begin{equation}\label{Sobolev_1dim}
H^1(I):=\Big\{ u\in L^2(I) \st \exists g\in L^2(I) \ : \ \    \int_I u \varphi'=-\int_I g\varphi \ , \ \forall \varphi\in C_c^1(I) \Big\}\ .
\end{equation}
\end{definition}

In words, the elements of $H^1(I)$ are $L^2(I)$ ``functions'' with square integrable \textit{weak derivatives} $g$ defined on test functions by integration by parts. We will usually write $g=u'$.
\medskip

For our purposes the following properties and results about $H^1(I)$ will be useful:

\medskip

\begin{theorem}
\label{th_H1_Hilbert}
With the scalar product
\begin{equation}\label{escalar_prod_1dim}
\langle u,v\rangle_{H^1}:=\int_I\big(uv+u'v'\big)\,,
\end{equation}
and its associated norm $\|u\|^2_{H^1}:=\langle u,u\rangle_{H^1}$, the space $H^1(I)$ is a separable Hilbert space.
\end{theorem}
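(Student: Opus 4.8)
The plan is to verify the three assertions in turn: that \eqref{escalar_prod_1dim} defines an inner product, that $H^1(I)$ is complete with respect to the associated norm, and that it is separable. First I would check that $\langle\cdot,\cdot\rangle_{H^1}$ is a genuine inner product on $H^1(I)$: bilinearity and symmetry are immediate from the corresponding properties of the $L^2(I)$ inner product, and positive-definiteness follows because $\|u\|_{H^1}^2=\|u\|_{L^2}^2+\|u'\|_{L^2}^2\geq \|u\|_{L^2}^2$, so $\|u\|_{H^1}=0$ forces $u=0$ in $L^2(I)$, hence as an element of $H^1(I)$. This also records the two basic inequalities $\|u\|_{L^2}\leq\|u\|_{H^1}$ and $\|u'\|_{L^2}\leq\|u\|_{H^1}$ that drive the completeness argument.

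For completeness, I would take a Cauchy sequence $(u_n)$ in $H^1(I)$. By the inequalities just noted, both $(u_n)$ and $(u_n')$ are Cauchy in the complete space $L^2(I)$, so $u_n\to u$ and $u_n'\to g$ in $L^2(I)$ for some $u,g\in L^2(I)$. The key step is to identify $g$ as the weak derivative of $u$: for each test function $\varphi\in C_c^1(I)$ we have $\int_I u_n\varphi'=-\int_I u_n'\varphi$, and passing to the limit on both sides is justified because $\varphi'$ and $\varphi$ are bounded with compact support, so convergence in $L^2(I)$ suffices to pass the limit through the integrals (Cauchy--Schwarz on the compact support). This yields $\int_I u\varphi'=-\int_I g\varphi$ for all $\varphi\in C_c^1(I)$, so $u\in H^1(I)$ with $u'=g$, and then $\|u_n-u\|_{H^1}^2=\|u_n-u\|_{L^2}^2+\|u_n'-g\|_{L^2}^2\to 0$.

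For separability, the cleanest route is to exhibit an isometry of $H^1(I)$ onto a closed subspace of a separable space. The map $T:H^1(I)\to L^2(I)\times L^2(I)$, $Tu=(u,u')$, is a linear isometry onto its image (with the product norm $\|(f,g)\|^2=\|f\|_{L^2}^2+\|g\|_{L^2}^2$), and by the completeness argument above its image is closed. Since $L^2(I)$ is separable, so is $L^2(I)\times L^2(I)$, hence so is any subset of it in the subspace topology, in particular $T(H^1(I))$; pulling back along the isometry $T^{-1}$ gives separability of $H^1(I)$. I expect the only subtle point to be the limit-exchange in the weak-derivative identification — i.e.\ making sure that $L^2$ convergence of $u_n$ and $u_n'$ really does let one pass to the limit in $\int_I u_n\varphi'$ and $\int_I u_n'\varphi$ — but this is handled by Cauchy--Schwarz once one notes that $\varphi,\varphi'\in L^2(I)$ because they are continuous with compact support; everything else is routine.
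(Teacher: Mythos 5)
Your proof is correct and follows essentially the same route as the paper's cited source (the paper itself delegates this theorem to Brezis, whose argument is exactly yours): completeness via the $L^2$-Cauchy property of $(u_n)$ and $(u_n')$ with the weak-derivative identification by passing to the limit against test functions, and separability via the isometry $u\mapsto(u,u')$ onto a closed subspace of $L^2(I)\times L^2(I)$. No gaps; the limit exchange you flag is indeed justified by Cauchy--Schwarz since $\varphi,\varphi'\in L^2(I)$.
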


\medskip

\begin{theorem}
\label{th_caclulus}
Let $u\in H^1(I)$ and $I$ a bounded or unbounded interval of $\mathbb{R}$; then there exists a \emph{unique, continuous} function $\tilde{u}\in C(\bar{I})$ such that $u=\tilde{u}$ almost everywhere, and
\begin{align*}
\tilde{u}(b)-\tilde{u}(a)=\int_b^a u'(t)\mathrm{d}t\,,\quad \forall b,a\in\bar{I}\,.
\end{align*}
\end{theorem}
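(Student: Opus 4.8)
The plan is to construct $\tilde u$ explicitly as a primitive of the weak derivative $g:=u'$ plus an appropriate constant, and then to verify that $u$ agrees with this primitive almost everywhere. First I would fix a basepoint $y_0\in I$ and set $v(x):=\int_{y_0}^x g(t)\,\mathrm{d}t$. Since $g\in L^2(I)$ implies $g\in L^1_{\mathrm{loc}}(I)$, the value $v(x)$ is well defined for every finite $x\in\bar I$ and $v$ is continuous there. Note that $\bar I$, the closure taken in $\mathbb{R}$, contains only finite points, and near any such point $g$ is integrable over a set of finite measure, so no regularity beyond $L^1_{\mathrm{loc}}$ is ever required (for bounded $I$ one simply has $g\in L^1(I)$, and $v$ is even absolutely continuous on $\bar I$).

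The first step proper is the lemma that $v$ has $g$ as its weak derivative, i.e.\ $\int_I v\varphi'=-\int_I g\varphi$ for every $\varphi\in C_c^1(I)$. This is a direct application of Fubini's theorem: one writes $v(x)$ as the integral of $g$ against the indicator of the segment joining $y_0$ and $x$ (with the obvious sign when $x<y_0$), integrates against $\varphi'$, and interchanges the order of integration; this is legitimate because $\varphi$ has compact support, so every integral is over a compact set and absolutely convergent, and the boundary contributions drop because $\varphi$ vanishes near $\partial I$.

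Subtracting, the function $u-v\in L^1_{\mathrm{loc}}(I)$ satisfies $\int_I(u-v)\varphi'=0$ for all $\varphi\in C_c^1(I)$, and the core of the argument---the step I expect to be the main obstacle---is the classical lemma that such a function must be constant almost everywhere. I would prove it in the usual way: fix once and for all a $\psi\in C_c^1(I)$ with $\int_I\psi=1$; given an arbitrary $w\in C_c(I)$, use as test function the primitive $\varphi(x):=\int_{y_0}^x\big(w(t)-(\int_I w)\,\psi(t)\big)\,\mathrm{d}t$. The delicate point is verifying that $\varphi\in C_c^1(I)$: it is $C^1$ with $\varphi'=w-(\int_I w)\psi$, and it has compact support precisely because the integrand has vanishing total integral, so $\varphi$ vanishes both to the left of and to the right of the (compact) support of $w-(\int_I w)\psi$. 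Inserting $\varphi$ gives $0=\int_I(u-v)\big(w-(\int_I w)\psi\big)=\int_I\big((u-v)-c\big)w$ with $c:=\int_I(u-v)\psi$; since $w\in C_c(I)$ is arbitrary, $u-v=c$ a.e.

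Finally I would set $\tilde u:=v+c$, which is continuous on $\bar I$ by construction and satisfies $u=\tilde u$ almost everywhere. Uniqueness is then immediate: two functions in $C(\bar I)$ that coincide off a null set must coincide everywhere, since that null set has dense complement in the interval $\bar I$. The fundamental theorem of calculus formula is read off at once, since for $a,b\in\bar I$,
\[
\tilde u(b)-\tilde u(a)=v(b)-v(a)=\int_{y_0}^b g(t)\,\mathrm{d}t-\int_{y_0}^a g(t)\,\mathrm{d}t=\int_a^b u'(t)\,\mathrm{d}t,
\]
which is the stated identity up to the orientation convention used for the limits of integration in the theorem.
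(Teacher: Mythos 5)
Your proposal is correct, and it is essentially the argument the paper points to rather than reproduces: the paper's ``proof'' is just a citation to Brezis (Theorem 8.2 together with the constancy lemma and Remark 5), and your construction --- primitive $v(x)=\int_{y_0}^x u'$, Fubini to show $v'=u'$ weakly, the lemma that a locally integrable function with vanishing weak derivative is constant a.e., then $\tilde u=v+c$ and uniqueness by density of the complement of a null set --- is exactly that textbook route, including the correct observation that $g\in L^2$ gives local integrability so the unbounded case needs nothing extra, and that the paper's displayed formula has its integration limits written in the opposite orientation. One small repair in the constancy lemma: with a generic basepoint $y_0$, the function $\varphi(x)=\int_{y_0}^x\bigl(w-(\int_I w)\,\psi\bigr)$ is constant, but not necessarily zero, on either side of the support of the integrand (the two constants agree because the total integral vanishes), so as written $\varphi$ need not lie in $C_c^1(I)$; choose the basepoint to the left of that support (or subtract the common constant, which does not change $\varphi'$) and the argument goes through verbatim.
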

The proof can be found in reference \cite{Brezis}, Theorem 8.2 and Remark 5. As the elements of $H^1(I)$ have continuous representatives in the closure $\overline{I}$, it makes sense to talk about the values of $u\in H^1(I)$ \textit{at any point} $x\in \overline{I}$ despite the fact that, strictly speaking, the elements of $H^1(I)$ are defined only ``modulo zero measure sets''. Notice that, in particular, the boundary values of $u$ are well defined if the interval $I$ is bounded. It is also important to point out that the continuous representative $\tilde{u}$ of $u\in H^1(I)$ is, actually, differentiable a.e. and the classical derivative is equal to the weak derivative a.e. \cite{Evans}.

\medskip

\begin{theorem}
\label{th_parts}
Let $u,v\in H^1(I)$. Then
\begin{align*}
uv\in H^1(I)
\end{align*}
\textit{and}
\begin{align*}
(uv)'=u'v+uv'\,.
\end{align*}
\textit{Furthermore, the formula for integration by parts holds:}
\begin{align*}
\int_a^bu'v=\tilde{u}(b)\tilde{v}(b)-\tilde{u}(a)\tilde{v}(a)-\int_a^b u v'\,,\quad\forall a,b\in\overline{I}\,.
\end{align*}   
\end{theorem}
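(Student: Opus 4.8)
The plan is to prove the product rule first, from which membership $uv\in H^1(I)$ and the integration-by-parts formula follow. Throughout I would work with the continuous representatives $\tilde u,\tilde v\in C(\overline I)$ provided by Theorem \ref{th_caclulus}, so that $uv=\tilde u\tilde v$ almost everywhere, and I would set $h:=u'v+uv'$ as the candidate weak derivative of $uv$. A preliminary remark fixes the integrability bookkeeping: since $u',v'\in L^2(I)$ and $\tilde u,\tilde v$ are bounded on every compact subinterval of $I$, the function $h$ lies in $L^1_{\mathrm{loc}}(I)$; to land in $H^1(I)$ rather than merely in $H^1_{\mathrm{loc}}(I)$ one needs in addition $h\in L^2(I)$ and $uv\in L^2(I)$, and this is where the continuity of the representatives must be upgraded to a global bound --- automatic when $I$ is bounded because $\overline I$ is compact, and otherwise obtained from the Sobolev embedding $H^1(I)\hookrightarrow L^\infty(I)$ (see \cite{Brezis}).

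The computational heart of the argument is a Fubini calculation. Fixing $x_0\in\overline I$ and using Theorem \ref{th_caclulus} to write $v(t)=\tilde v(x_0)+\int_{x_0}^t v'$, I would expand, for any $x\in\overline I$,
\begin{align*}
\int_{x_0}^x u'(t)\,v(t)\,\d t=\tilde v(x_0)\big(\tilde u(x)-\tilde u(x_0)\big)+\int_{x_0}^x\!\!\int_{x_0}^t u'(t)\,v'(s)\,\d s\,\d t\,,
\end{align*}
then interchange the order of integration over the triangle $\{x_0\le s\le t\le x\}$ --- legitimate because $u'$ and $v'$ are integrable there --- to rewrite the double integral as $\int_{x_0}^x v'(s)\big(\tilde u(x)-\tilde u(s)\big)\,\d s$, and finally re-express this last quantity again via Theorem \ref{th_caclulus}. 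Simplifying, the boundary terms combine to give
\begin{align*}
\int_{x_0}^x\big(u'v+uv'\big)=\tilde u(x)\tilde v(x)-\tilde u(x_0)\tilde v(x_0)\,,\qquad\forall x\in\overline I\,.
\end{align*}
In other words, $\tilde u\tilde v$ is the indefinite integral of $h\in L^2(I)$ based at $x_0$.

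From here both remaining claims are immediate. Testing the displayed identity against $\varphi\in C_c^1(I)$ and applying Fubini once more yields $\int_I \tilde u\tilde v\,\varphi'=-\int_I h\,\varphi$; combined with $uv=\tilde u\tilde v\in L^2(I)$, this is precisely the statement that $uv\in H^1(I)$ with $(uv)'=h=u'v+uv'$ (alternatively, one invokes the converse half of the characterization behind Theorem \ref{th_caclulus}, also in \cite{Brezis}). Since $\tilde u\tilde v$ is continuous it is the distinguished representative of $uv$, so applying Theorem \ref{th_caclulus} to $uv$ gives, for all $a,b\in\overline I$,
\begin{align*}
\tilde u(b)\tilde v(b)-\tilde u(a)\tilde v(a)=\int_a^b (uv)'=\int_a^b\big(u'v+uv'\big)\,,
\end{align*}
which rearranges to the integration-by-parts formula in the statement.

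The only genuinely delicate point, and the one I would be most careful about, is the global integrability needed to place $uv$ in $H^1(I)$ itself (as opposed to $H^1_{\mathrm{loc}}(I)$) when $I$ is unbounded: without the $L^\infty$ bound on $\tilde u,\tilde v$ one cannot guarantee $uv\in L^2(I)$ or $u'v+uv'\in L^2(I)$. Everything else --- the Fubini interchange and the passage from an absolutely continuous primitive to an $H^1$ weak derivative --- is routine once Theorem \ref{th_caclulus} is in hand.
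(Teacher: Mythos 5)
Your proof is correct, but it follows a genuinely different route from the one the paper relies on: the paper does not spell out an argument at all, deferring to \cite{Brezis} (Corollary 8.10), where the standard proof approximates $u$ and $v$ by restrictions of smooth compactly supported functions, uses the continuous embedding $H^1(I)\hookrightarrow L^\infty(I)$ to pass to the limit in the classical product rule, concludes $uv\in H^1(I)$ with $(uv)'=u'v+uv'$, and only then obtains integration by parts by applying Theorem \ref{th_caclulus} to $uv$. You invert that order: you derive the parts formula directly from Theorem \ref{th_caclulus} via a Fubini interchange over the triangle $\{x_0\le s\le t\le x\}$ (legitimate, since $u',v'\in L^2\subset L^1$ on bounded subintervals and all points of $\overline I$ are finite), and then read off that $\tilde u\tilde v$ is a primitive of $u'v+uv'$, which yields the weak-derivative statement by the same lemma that underlies Theorem \ref{th_caclulus}. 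Your route avoids the density theorem altogether, at the price of the explicit bookkeeping with signed integrals and one extra Fubini step when testing against $\varphi\in C_c^1(I)$; both routes need the $L^\infty$ control of $\tilde u,\tilde v$ on unbounded $I$, and you correctly flag this as the only delicate point (note the paper itself proves exactly such a bound for $H^1(\mathbb{R})$ by an elementary averaging argument in Section 3.2, which you could invoke instead of citing the embedding theorem). In short: a valid, somewhat more self-contained argument than the cited one, logically organized in the reverse direction.
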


As a consequence, $H^1(I)$ is a Banach algebra.

\bigskip

The Sobolev space $H^1(I)$ introduced above can be generalized in multiple ways, for instance, by introducing higher order derivatives or replacing the interval $I$ by regions in $\mathbb{R}^n$, with $n\in\mathbb{N}$. As done above, we will forgo generality and just introduce the spaces that we will use in the rest of the paper. 

We start by defining the Sobolev space
\[
H^1(\mathbb{R}^3)=\bigg\{ u\in L^2(\mathbb{R}^3) \st \forall i \,,\, \exists g_i\in L^2(\mathbb{R}^3)\   :\int_{\mathbb{R}^3}\!\!\!u\frac{\partial\varphi}{\partial x_i}\!=\!-\int_{\mathbb{R}^3}\!\!g_i\varphi  , \ \forall\varphi\in C_c^\infty(\mathbb{R}^3)  \bigg\}\,.
\]
For $u\in H^1(\mathbb{R}^3)$ we write $g_i=\frac{\partial u}{\partial x_i}$  and in the following we use the shorthand $\partial_i\!=\!\frac{\partial\,\,\,\,\,}{\partial x_i}$, $i=1,2,3$. We can endow $H^1(\mathbb{R}^3)$ with the structure of a Hilbert space by introducing the scalar product
\begin{equation}\label{scalar3_1}
  \langle u,v\rangle:=\int_{\mathbb{R}^3}\big(uv+\sum_{i=1}^3\left(\partial_i u\right)\,\left(\partial_i v\right)\big)\ .
\end{equation}

When considering field theories in 3-dimensional spatial regions and, in particular, in our discussion of the holonomy-flux algebra, we will use the Hilbert space $\mathbbm{H}^2(\mathbb{R}^3)$ obtained by endowing the set
\begin{equation}\label{Htilde2}
  H^2(\mathbb{R}^3):=\left\{u\in H^1(\mathbb{R}^3) \st \frac{\partial u}{\partial x_i}\in H^1(\mathbb{R}^3) \ , \ \forall i=1,2,3  \right\}
\end{equation}
with the scalar product
\begin{equation}\label{scalar3_2}
  \langle u,v\rangle_{\mathbbm{H}^2}:=\int_{\mathbb{R}^3}\Big(uv+2\sum_{i=1}^3\left(\partial_i u\right)\,\left(\partial_iv\right)+\sum_{i, j=1}^3\left(\partial_i\partial_j u\right)\, \left( \partial_i\partial_j v \right)\Big)\,.
\end{equation}
This scalar product differs from the usual one owing to the presence of the 2 factor in the second term. However, it is possible to show that the associated norm $\|\cdot\|_{\mathbbm{H}^2}$ is equivalent to the standard one and, hence, nothing changes from the topological point of view. Indeed, the standard norm is
\begin{align*}
\|f\|_s^2=\int_{\mathbb{R}^3} \Big(f^2+\sum_{i=1}^3\left(\partial_i f\right)^2 +\sum_{i, j=1}^3\left(\partial_i\partial_j f\right)^2 \Big)\ ,
\end{align*}
and it is immediate to see that $\|f\|_s\leq \|f\|_{\mathbbm{H}^2}\leq 2^{1/2}\|f\|_s$.

An important consequence of the fact that we will always work with Hilbert spaces is the very useful characterization of the elements of their duals furnished by the Riesz-Fréchet representation theorem that we state in the following form

\medskip

\begin{theorem}
\label{th_Riesz_Frechet}
{Let $F:\mathcal{H}\rightarrow \mathbb{R}$ be a linear map on a real Hilbert space $\mathcal{H}$, then $F$ is continuous if and only if there exists $\psi\in \mathcal{H}$ such that
\begin{align*}
F(v)=\langle\psi,v\rangle_{\mathcal{H}}\,, \forall v\in \mathcal{H}\,.
\end{align*}
Furthermore, $\psi$ is unique and $\|F\|_{\mathcal{H}^*}=\|\psi\|_{\mathcal{H}}$\,.
}
\end{theorem}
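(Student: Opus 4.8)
The plan is to establish the two implications separately, then settle uniqueness and the norm identity. First I would treat the easy direction: assume $F(v) = \langle \psi, v\rangle_{\mathcal{H}}$ for some $\psi \in \mathcal{H}$. The Cauchy--Schwarz inequality immediately gives $|F(v)| \le \|\psi\|_{\mathcal{H}}\,\|v\|_{\mathcal{H}}$, so $F$ is bounded, hence continuous, with $\|F\|_{\mathcal{H}^*} \le \|\psi\|_{\mathcal{H}}$. If $\psi \ne 0$, evaluating $F$ on $v = \psi/\|\psi\|_{\mathcal{H}}$ produces $F(v) = \|\psi\|_{\mathcal{H}}$, which forces $\|F\|_{\mathcal{H}^*} = \|\psi\|_{\mathcal{H}}$ (the case $\psi = 0$ being trivial). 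This already yields both the continuity claim and the norm identity for any $\psi$ that represents $F$, so it only remains to produce such a $\psi$ when $F$ is continuous.

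For the converse, assume $F$ is continuous. If $F \equiv 0$, take $\psi = 0$. Otherwise consider the kernel $N := \ker F$; continuity makes $N$ a closed linear subspace, and $F \not\equiv 0$ makes it proper. The hard part will be precisely here: I would invoke the orthogonal decomposition theorem for Hilbert spaces (the projection onto a closed subspace), which is the step where \emph{completeness} of $\mathcal{H}$ is genuinely used, to conclude $N^\perp \ne \{0\}$. I then pick $z \in N^\perp$ with $\|z\|_{\mathcal{H}} = 1$; note $F(z) \ne 0$ since $z \notin N$.

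The key observation is then that, for an arbitrary $v \in \mathcal{H}$, the vector $w := F(v)\,z - F(z)\,v$ satisfies $F(w) = F(v)F(z) - F(z)F(v) = 0$, hence $w \in N$ and therefore $\langle z, w\rangle_{\mathcal{H}} = 0$. Expanding this inner product gives $F(v) - F(z)\,\langle z, v\rangle_{\mathcal{H}} = 0$, i.e.\ $F(v) = \langle F(z)\,z,\, v\rangle_{\mathcal{H}}$ for every $v$, so $\psi := F(z)\,z$ represents $F$. Finally, uniqueness is routine: if $\langle \psi_1, v\rangle_{\mathcal{H}} = \langle \psi_2, v\rangle_{\mathcal{H}}$ for all $v$, then taking $v = \psi_1 - \psi_2$ gives $\|\psi_1 - \psi_2\|_{\mathcal{H}}^2 = 0$, so $\psi_1 = \psi_2$; and combining this with the inequality from the first paragraph gives $\|F\|_{\mathcal{H}^*} = \|\psi\|_{\mathcal{H}}$, completing the argument. (This is the classical Riesz--Fréchet proof; a reference is \cite{Brezis}.)
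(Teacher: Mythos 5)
Your proof is correct, but it follows a different route from the paper: the paper does not actually prove Theorem \ref{th_Riesz_Frechet} at all (it defers to \cite{Brezis}); what it does instead is record, right after the statement, the constructive observation that in a \emph{separable} Hilbert space one can expand $v=\sum_n\langle v,e_n\rangle_{\mathcal{H}}e_n$ in an orthonormal basis and read off the representative $\psi=\sum_n F(e_n)e_n$ as in \eqref{Riesz}, which is precisely the device it later uses in \ref{app_1} to compute the representative of the evaluation and, via the necessary condition \eqref{nec_cond_Riesz}, to rule out continuity of the ``derivative evaluation''. Your argument is the classical kernel-based proof: the key lemma is the orthogonal decomposition onto the closed subspace $\ker F$, the vector $w=F(v)z-F(z)v$ lies in the kernel, and $\psi=F(z)z$ represents $F$; all steps check out, including the norm identity via Cauchy--Schwarz plus evaluation at $\psi/\|\psi\|_{\mathcal{H}}$ and the standard uniqueness argument. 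What your route buys is self-containedness and full generality (no separability assumption, and no need to justify convergence of the series $\sum_n F(e_n)e_n$, which the paper's sketch quietly presupposes and which itself requires a small boundedness argument); what the paper's route buys is an explicit formula for $\psi$ in a concrete basis, which is exactly what its subsequent computations need. So the proposal is a valid proof, just not a reconstruction of the paper's (essentially absent) one.
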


\medskip

\noindent In the following, we will mostly deal with real, separable Hilbert spaces. In such circumstances it will be always possible to find a countable orthonormal basis $(e_n)_{n\in \mathbb{N}}$ and expand any vector $v\in\mathcal{H}$ as
\begin{align*}
v=\sum_{n=1}^{\infty}\langle v,e_n\rangle_{\mathcal{H}}\,e_n\,,
\end{align*}
hence,
\begin{align*}
F(v)=\sum_{n=1}^{\infty}\langle v,e_n\rangle_{\mathcal{H}}F(e_n)=\langle \psi,v\rangle_{\mathcal{H}}
\end{align*}
with
\begin{equation}\label{Riesz}
\psi=\sum_{n=1}^{\infty}F(e_n)e_n\,.
\end{equation}
This expansion is often useful to find the Riesz-Fréchet representative of a continuous, linear functional (see \ref{app_1}). It is also worth pointing out that a necessary condition for a linear functional $F:\mathcal{H}\rightarrow \mathbb{R}$ to be continuous is
\begin{equation}\label{nec_cond_Riesz}
\sum_{n=0}^{\infty}|F(e_n)|^2<+\infty
\end{equation}
for any orthonormal basis $(e_n)_{n\in \mathbb{N}}$.



\subsection{Differentiability and functional derivatives}\label{subsec_diff}

Somewhat surprisingly, the word \textit{differentiability} in the mathematical context and in the Hamiltonian treatment of gravity (and field theories, in general) has different meanings, in particular when spacetime boundaries are present.  

From a mathematical perspective, a very useful notion is what we will here refer to as \textit{Fréchet differentiability} (or differentiability for short). This is a fundamental concept in analysis upon which many important theorems and results are based. 

In the context of field theories, especially those formulated in manifolds with boundary, the term \textit{differentiability} is somewhat more vague, but it is usually employed to indicate that the variations of certain functionals (given by integrals) do not have boundary contributions. The justification for this is the necessity (discussed by Regge and Teitelboim in \cite{RT}) to take into account boundary terms to guarantee that the solutions to the Hamiltonian version of the Einstein field equations have the correct behaviour in the asymptotically flat case. In fact, the Regge-Teitelboim procedure gives the correct Hamiltonian description in this setting. 

In our opinion this dichotomy has caused some confusion that we need to dispel in order to correctly frame some of the ideas that we present in the paper. 

We start by recalling the main results about differentiability in the mathematical sense.

\medskip

\begin{definition}[Fréchet differentiability]
\label{def_differentiability}
Let $\mathcal{B}_1$ and $\mathcal{B}_2$ be two Banach spaces with norms $\|\cdot\|_1$ and $\|\cdot\|_2$ respectively. Let $A\subset \mathcal{B}_1$ be open and let us consider a function
\begin{align*}
f:A\rightarrow \mathcal{B}_2\,.
\end{align*}
We say that the function $f$ is \emph{Fr\'echet differentiable} at $x\in A$ if there exists a linear and continuous map $\dd_xf:\mathcal{B}_1\rightarrow \mathcal{B}_2:h\mapsto\dd_xf(h)$, called the differential of $f$ at $x$, such that
\begin{align*}
\lim_{h\rightarrow 0}\frac{\|f(x+h)-f(x)-\dd_xf(h)\|_2}{\|h\|_1}=0\,.
\end{align*}
\end{definition}
It can be immediately shown that, when it exists, the differential is unique. A particular simple instance of differentiable functions are those linear and continuous as can be seen by replacing $\dd_xf(h)$ by $f(h)$ in the definition.

\medskip

Differentiability is a fruitful concept that allows us to extend many results of the analysis in $\mathbb{R}^n$, $n\in\mathbb{N}$ (for instance, the chain rule) to infinite dimensional Banach spaces such as the ones that we use in the paper. Also, this is the kind of differential used to define the exterior derivative in differential geometry.

\bigskip

In the context of gravitational theories and, more generally, in the discussion of the Lagrangian and Hamiltonian formulations of field theories defined on manifolds with boundaries, the word \textit{differentiability} is sometimes used to refer to the possibility of writing the variation of a functional $S[\phi]$ depending on fields $\phi$ (usually tensors on a manifold $M$) in the form
\begin{equation}\label{Wald_funct}
\delta S=\int_M \frac{\delta S}{\delta \phi} \delta \phi\,,
\end{equation}
requiring, in particular, that \textit{no boundary integrals appear in the right hand side}. Notice that, this ``definition'' is rather vague from the mathematical point of view as, for instance, no mention is made about the functional space where the field $\phi$ lives nor the regularity properties satisfied by $S$.

We will sometimes refer to this other concept as Regge-Teitelboim or RT-differentiability (also RT-admissibility) \cite{soloviev,RT,Faddeev}. The formal object $\frac{\delta S}{\delta \phi}$ that appears in \eqref{Wald_funct} is usually referred to as the \textit{functional derivative of }$S$. Several comments are in order:
\begin{itemize}
\item The RT-admissibility condition guarantees that the variational equations coming from the action do not have boundary contributions that might clash with the boundary conditions imposed on the fields. This is very important in the case of asymptotically flat general relativity as discussed in \cite{RT}.
\item The Fr\'echet and RT differentiability concepts have very little in common. The first one has to do with the problem of finding suitable approximations to functions at points of their domains whereas the other is tied to the dynamics defined by an action principle.  Let us, though, try to compare them. The field variations in \eqref{Wald_funct} play the role of the $h$ in Definition \ref{def_differentiability}, hence, in the case of a scalar function $S$, the variation $\delta S$ can be interpreted as the ``action'' of a ``dual object'' of the form
    \begin{align*}
    \int_M \frac{\delta S}{\delta \phi}(\cdot)
    \end{align*}
    on $h$. This action looks like the Fr\'echet differential in $L^2$ when one uses the Riesz-Fr\'echet theorem, so it may be natural to consider the fields as elements of $L^2$. However, this is problematic because, as general elements of $L^2$, it is then impossible to talk neither about field derivatives nor field values at boundaries (which we need for some of our arguments).
\item The mathematical consequences of Fréchet differentiability are clear and many theorems rely on this concept. It is quite dangerous to export these results to situations in which differentiability is understood in the second sense.
\item A situation where the two preceding comments apply is the one contemplated in this paper: the computation of Poisson brackets in field theories. In the following, the relevant concept will be Fréchet-differentiability used in conjunction with the mathematical properties of the Sobolev spaces that we use as configuration manifolds (and their cotangent bundles which will be our phase spaces).
\end{itemize}
In order to avoid any confusion we spell now the definition of functional derivative that we will use throughout the paper

\medskip

\begin{definition}[Functional derivative]
\label{def_functional_derivative}
Let us consider a differentiable function $F:H\rightarrow \mathbb{R}$ on a Sobolev Hilbert space $H$. The functional derivative of $F$ at $\psi\in H$, denoted as $DF(\psi)$, is the unique element of $H$ satisfying
\begin{align*}
\dd_\psi F(h)=\langle DF(\psi),h\rangle_{H}\,,
\end{align*}
for all $h\in H$.
\end{definition}

\medskip

\noindent Notice that this is well defined as a consequence of the Riesz-Fr\'echet representation Theorem \ref{th_Riesz_Frechet} and the fact that the differential is a continuous linear functional. We will say that the functional derivative $DF(\psi)$ is the \textit{Riesz-Fr\'echet representative} of $\dd_\psi F$. Whenever convenient we will drop the point $\psi$ where the differential is computed. If $H$ is an $L^2$ space, the functional derivative can be understood in the sense of \eqref{Wald_funct} but we consider other situations.

\subsection{The phase space and Poisson brackets}\label{subsec_phase space}

The configuration spaces for field theories are function spaces endowed with the appropriate topological structures. Although the tangent bundle of such a configuration space would be the natural setting to describe the Lagrangian dynamics of a field system, quite often it is better to employ the so called \textit{manifold-domains}. These can be loosely described as bundles for which the basis manifold is a given function space, but the fibers consist of elements of a \textit{different} (larger) function space \cite{ChernoffMarsden}. For instance, due to the presence of spatial derivatives, the appropriate domain for the standard Lagrangian of a scalar field is $T_{H^1}L^2=H^1\times L^2$, where the base manifold domain is $H^1$, but the fibers (field velocities) consist of elements of $L^2$ because no spatial derivatives of the velocities show up in the Lagrangian.  In the following, we will work with tangent or cotangent bundles defined on some particular function spaces. The readers should always keep in mind that the standard field theories \textit{are not} defined on these types of spaces.

\medskip

If the configuration space $Q$ of a field theory is a Sobolev space $H$ of one of the types that we are considering here, the cotangent bundle $T^*Q$ (phase space) is isomorphic to the Cartesian product $H\times H^*$. The elements of $T^*Q$ are pairs $(\phi,\bm{\pi})$ with $\phi\in H$ and $\bm{\pi}\in H^*$ (we will use boldface letters to denote covectors, i.e. elements of dual Hilbert spaces). As a consequence of the Riesz-Fréchet representation Theorem \ref{th_Riesz_Frechet} there is a unique $\pi\in H$ such that $\bm{\pi}(\cdot)=\langle \pi,\cdot\rangle_H$.

For $(\phi, \bm{\pi})\in T^*(Q)$ we will write tangent vectors sitting at this phase space point as
\begin{align*}
\big((\phi,\bm{\pi}),(X_1, {\bm{\mathrm{X}}}_2)\big)\in T_{(\phi,\bm{\pi})}T^*Q\,.
\end{align*}
It is important to notice that the second component of the vector is a covector. Vector fields are smooth assignments of vectors of the previous type to all the points of $T^*Q$. We will denote the space of such vector fields as $\mathfrak{X}(T^*Q)$.

The canonical symplectic form in $T^*Q$ acts on fields $\mathbb{X},\mathbb{Y}\in\mathfrak{X}(T^*Q)$ as
\begin{equation}\label{symplectic1}
\Omega(\mathbb{X},\mathbb{Y})={\bm{\mathrm{Y}}}_2(X_1)-{\bm{\mathrm{X}}}_2(Y_1)\,.
\end{equation}
This result can be obtained by a detailed computation starting from the symplectic potential $\theta$ canonically defined in the cotangent bundle $T^*Q$. A result by Marsden \cite{Marsden} tells us that in the cases considered here (all of them reflexive Banach manifolds) the symplectic form is strongly non-degenerate. By using the Riesz-Fréchet representatives of ${\bm{\mathrm{Y}}}_2$ and ${\bm{\mathrm{X}}}_2$ we find
\begin{equation}\label{symplectic2}
\Omega(\mathbb{X},\mathbb{Y})=\langle Y_2,X_1\rangle_{H}-\langle X_2,Y_1\rangle_{H}\,.
\end{equation}
An important consequence of the strong non-degeneracy of the symplectic form $\Omega$ \eqref{symplectic1} is the possibility to define the Hamiltonian vector field $\mathbb{X}_f$ associated with a real differentiable function $f$ in phase space as the unique solution to the equation
\begin{equation}\label{Poisson1}
\imath_{\mathbb{X}_f}\Omega=\dd f\,,
\end{equation}
where here $\dd$ is the exterior differential in $T^*Q$.

In order to write the symplectic form in terms of functional derivatives we need to introduce partial functional derivatives. For instance,  for real differentiable functions on $H\times H$ we first introduce the continuous maps
\begin{align*}
  & \sigma_1:H\rightarrow H\times H:h\mapsto (h,0)\,, \\
  & \sigma_2:H\rightarrow H\times H:h\mapsto (0,h)\,,
\end{align*}
and consider the continuous and linear functionals from $H$ to $\mathbb{R}$
\begin{align*}
  & (\dd_{(\psi_1,\psi_2)}g)\circ \sigma_1\,, \\
  & (\dd_{(\psi_1,\psi_2)}g)\circ \sigma_2\,,
\end{align*}
where $g:H\times H\rightarrow \mathbb{R}$ is a differentiable function (in terms of the natural norms in $H\times H$ and $\mathbb{R}$). We have now

\medskip

\begin{definition}
\label{def_partial_functional_derivative}
The partial functional derivatives of a differentiable real function $g:H\times H\rightarrow \mathbb{R}$ at $(\psi_1,\psi_2)\in H\times H$, denoted as $D_1g(\psi_1,\psi_2)$ and $D_2g(\psi_1,\psi_2)$, are, respectively, the Riesz-Fr\'echet representatives of the continuous and linear functionals
\begin{align*}
  & (\dd_{(\psi_1,\psi_2)}g)\circ \sigma_1\,, \\
  & (\dd_{(\psi_1,\psi_2)}g)\circ \sigma_2\,.
\end{align*}
\end{definition}

\noindent In other words
\begin{align*}
&\dd_{(\psi_1,\psi_2)}g(h,0)=\langle D_1g(\psi_1,\psi_2) , h\rangle_{H}\,,\\
&\dd_{(\psi_1,\psi_2)}g(0,h)=\langle D_2g(\psi_1,\psi_2) , h\rangle_{H}\,,
\end{align*}
for all $h\in H$. In general we have
\begin{equation}\label{dH}
\dd_{(\psi_1,\psi_2)}g(h_1,h_2)=\langle D_1g(\psi_1,\psi_2) , h_1\rangle_{H}+\langle D_2g(\psi_1,\psi_2) , h_2\rangle_{H}\,.
\end{equation}
Notice that $\displaystyle D_1g(\psi_1,\psi_2)\,, D_2g(\psi_1,\psi_2)\in H$. In the following, we will apply the preceding results to the computation of Poisson brackets. 

\medskip

\begin{definition}
Given two functions in phase space $f,g:T^*Q\rightarrow \mathbb{R}$ their Poisson bracket $\{f,g\}:T^*Q\rightarrow \mathbb{R}$ is
\begin{equation}\label{Poisson2}
\{f,g\}:=\Omega(\mathbb{X}_f,\mathbb{X}_g)=\dd f(\mathbb{X}_g)=-\dd g(\mathbb{X}_f)\,,
\end{equation}
where $\mathbb{X}_f$ and $\mathbb{X}_g$ are the Hamiltonian vector fields associated with $f$ and $g$. Here we suppose that $\Omega$ is strongly non-degenerate.
\end{definition}

\medskip

In order to compute the Hamiltonian vector field associated with a differentiable function $f:H \times H \rightarrow \mathbb{R}$ we have to find $\mathbb{X}_f$ such that
\begin{equation}\label{X_f}
\Omega(\mathbb{X}_f,\mathbb{Y})=\dd f(\mathbb{Y})\,,
\end{equation}
for all $\mathbb{Y}\in\mathfrak{X}(T^*Q)$. According to \eqref{symplectic2}, the left hand side of equation \eqref{X_f} is
\begin{align*}
\langle Y_2,X_{f\,1}\rangle_{H}-\langle X_{f\, 2},Y_1\rangle_{H}\,.
\end{align*}
whereas, according to \eqref{dH}, the right hand side is
\begin{align*}
\dd f(\mathbb{Y})=\langle D_1 f, Y_1\rangle_{H}+\langle D_2 f, Y_2\rangle_{H}\,.
\end{align*}
We immediately conclude that the components of the Hamiltonian vector field $\mathbb{X}_f$ are
\begin{align*}
X_{f\,1}=D_2 f\,,\quad X_{f\,2}=-D_1 f\,,
\end{align*}
and the Poisson bracket is defined by
\begin{equation}\label{PB_final}
\{f,g\}=\langle D_1 f, D_2 g\rangle_{H}-\langle D_1 g, D_2 f\rangle_{H}\,.
\end{equation}
The last equality gives the expression of the Poisson bracket of the phase space functions $f,g$ in terms of the functional derivatives that we have defined above.

\medskip

We end this section with the following comment. The Poisson brackets for a field theory (defined, say, in $\mathbb{R}^n$) are usually written as
\begin{align*}
\{f,g\}=\int_{\mathbb{R}^n}\left(\frac{\delta f}{\delta \phi(x)}\frac{\delta g}{\delta \pi(x)}-\frac{\delta g}{\delta \phi(x)}\frac{\delta f}{\delta \pi(x)}\right) \mathrm{d}^nx\,,
\end{align*}
which, formally, looks like a sum of $L^2$ scalar products of ``functional derivatives'', which are to be understood as ``evaluated at arbitrary but fixed $\phi$ and $\pi$'' so that the previous expression can be interpreted as defining a real function in phase space (i.e. a function of $\phi$ and $\pi$). This expression is usually thought of as a generalization of the standard formula for finite dimensional mechanical systems obtained by trading sums for integrals in such a way that, effectively, the integration variable $x$ becomes a ``continuous summation index''. As we have shown, there is an element of truth in this approach but it must be phrased in precise mathematical terms.

%
%
\section{Sample spaces}{\label{sec_sample_spaces}}

In this section, we introduce several sample functional spaces and study some interesting objects defined in them that will play an important role in the computations of Poisson brackets that we give in the next two sections.

\subsection{The \texorpdfstring{$H^1(0,1)$}{H1(0,1)} Sobolev space}

As a first example, we consider the Sobolev space $H^1(0,1)$ defined on the interval $(0,1)$ of the real line with the scalar product given by \eqref{escalar_prod_1dim}. We will use this space to illustrate some issues related to the computation of Poisson brackets in a field theory defined on a manifold with boundary. In the following, we will make use of the following interesting functions:

\medskip

\noindent \textbf{1) The evaluation:} Let us take $x\in[0,1]$ and consider the following function
\begin{align*}
\Ev_x:H^1(0,1)\rightarrow \mathbb{R}:u \mapsto \widetilde{u}(x)\,,
\end{align*}
where $\widetilde{u}$ denotes the unique, continuous representative of $u\in H^1(0,1)$ whose existence is guaranteed by Theorem \ref{th_caclulus}. We study now some properties of $\Ev_x$.
\begin{enumerate}
\item[i)] For every $u \in H^1(0,1)$ it is possible to show that $\Ev_x(u)$ can be written as a scalar product in $H^1(0,1)$, indeed, let $x\in [0,1]$ and define the function
\begin{equation}\label{Riesz_rep_eval}
\mathcal{E}_x:[0,1]\rightarrow\mathbb{R}:t\mapsto\mathcal{E}_x(t)=\left\{\begin{array}{ll}
                                                                 \displaystyle\frac{\cosh(1-x)\cosh t}{\sinh1}\,, & t\in[0,x] \\
                                                                 \quad&\quad \\ 
                                                                 \displaystyle\frac{\cosh x\cosh(1-t)}{\sinh1}\,, & t\in[x,1]
                                                               \end{array}\right.
\end{equation}
Then $\displaystyle \Ev_x(u)=\widetilde{u}(x)=\langle\mathcal{E}_x, u\rangle_{H^1}\,.$ Notice that the Riesz-Fr\'echet theorem guarantees the uniqueness of $\mathcal{E}_x$.

In order to prove this, one has first to check that $\mathcal{E}_x\in H^1(0,1)$. This is straightforward: on one hand we obviously have $\mathcal{E}_x\in L^2(0,1)$. On the other,
\begin{align*}
\mathcal{E}'_x(t)=\left\{\begin{array}{ll}
    \displaystyle\phantom{-}\frac{\cosh(1-x)\sinh t}{\sinh1}\,, & t\in[0,x) \\
    \quad&\quad\\
     \displaystyle-\frac{\cosh x\sinh(1-t)}{\sinh1}\,, & t\in(x,1]
                \end{array}\right.\,,
\end{align*}
which is also an element of $ L^2(0,1)$. Using now integration by parts (see Theorem \ref{th_parts}) we get
\begin{align*}
  \langle\mathcal{E}_x, u\rangle_{H^1}&=\int_0^1\big(u(t)\mathcal{E}_x(t)+u'(t)\mathcal{E}_x'(t)\big)\mathrm{d}t \\
  &= \int_0^x\big(u(t)\mathcal{E}_x(t)+u'(t)\mathcal{E}_x'(t)\big)\mathrm{d}t\\
  & \ \ \ \ +\int_x^1\big(u(t)\mathcal{E}_x(t)+u'(t)\mathcal{E}_x'(t)\big)\mathrm{d}t\\
  &=\widetilde{u}(t)\mathcal{E}'_x(t)\Big|_0^x+\widetilde{u}(t)\mathcal{E}'_x(t)\Big|_x^1+\int_0^1u(t)\big(\mathcal{E}_x(t)-\mathcal{E}_x''(t)\big)\mathrm{d}t\\
  &=\widetilde{u}(t)\frac{\cosh(1-x)\sinh t}{\sinh1}\Big|_0^x-\widetilde{u}(t)\frac{\cosh x\sinh(1-t)}{\sinh1}\Big|_x^1\\
  &=\frac{\widetilde{u}(x)}{\sinh1}\big(\cosh(1-x)\sinh x+\sinh(1-x)\cosh x\big)=\widetilde{u}(x)\\
  &=\Ev_x(u) \ ,
\end{align*}
where we have used $\displaystyle\mathcal{E}_x(t)-\mathcal{E}_x''(t)=0$ for $t\neq x$, $\mathcal{E}'_x|{(0,x)}\in H^1(0,x)$ and $\mathcal{E}'_x|{(x,1)}\in H^1(x,1)$.
As an interesting remark, it is worth noting that for $x,y\in[0,1]$ we have $\mathcal{E}_x(y)=\mathcal{E}_y(x)$.
\item[ii)] As a consequence of the previous result we immediately see that $\Ev_x$ is linear and continuous. This implies that $\Ev_x$ is differentiable and the differential is given by $\Ev_x$ itself.
\begin{align*}
\dd_u\Ev_x=\Ev_x\quad \mathrm{i.e.}\quad  \dd_u\Ev_x(h)=\Ev_x(h)=\tilde{h}(x)\,, \forall h\in H^1(0,1)\,.
\end{align*}
According to Definition \ref{def_functional_derivative} the functional derivative of $\Ev_x$ is then
\begin{align*}
D\Ev_x(u)=\mathcal{E}_x\,,
\end{align*}
which is independent of $u\in H^1(0,1)$.
\item[iii)] Obviously $\displaystyle \langle\mathcal{E}_x , \mathcal{E}_y\rangle_{H^1}=\mathcal{E}_x(y)=\mathcal{E}_y(x)$.
\item[iv)] It is not possible to define the evaluation of the derivative of an element of $u\in H^1(0,1)$ despite the fact that the continuous representative of such an $u$ is absolutely continuous and, hence, differentiable a.e. One way to convince oneself of the impossibility of such an endeavour is writing the Riesz-Fréchet representative of this map by introducing an orthonormal basis as in Theorem \ref{th_Riesz_Frechet} and checking that the result is not an element of $H^1(0,1)$. We discuss this in \ref{app_1}.
\end{enumerate}

\medskip

\noindent \textbf{2) An integral over $(0,1)$:} Let us study now the function
\begin{align*}
K:H^1(0,1)\rightarrow \mathbb{R}:u\mapsto \frac{1}{2}\int_{[0,1]}u^2\,.
\end{align*}
We first show that $K$ is differentiable. In view of
\begin{align*}
K(u+h)=\frac{1}{2}\int_{[0,1]}\big(u^2+2u h  +h^2   \big)
\end{align*}
it is natural to postulate
\begin{align*}
\dd_u K(h)=\int_{[0,1]}u h\,.
\end{align*}
This linear functional is continuous because
\begin{align*}
  \left|\int_{[0,1]}u h\right|&=\left|\int_{[0,1]}u h+u'h'-u'h' \right|=\big| \langle u,h\rangle_{H^1}-\langle u',h'\rangle_{L^2} \big| \\
  &\leq \big|\langle u,h\rangle_{H^1}\big|+\big|\langle u',h'\rangle_{L^2} \big| \leq \|u\|_{H^1} \|h\|_{H^1}+\|u'\|_{L^2} \|h'\|_{L^2}\\
  & \leq2\|u\|_{H^1}\|h\|_{H^1}\,,
\end{align*}
where we have used the Cauchy-Schwarz inequality and $\|\psi'\|_{L^2}\leq\|\psi\|_{H^1}$ for any $\psi\in H^1(0,1)$. In order to prove Fréchet-differentiability \eqref{def_differentiability} we compute
\begin{align}\label{limit1}
\begin{split}
\lim_{h\rightarrow0}\frac{1}{\|h\|}_{\!H^1} \! & \bigg|K(u+h)-K(u)-\int_{[0,1]}u h \bigg| \\
&= \lim_{h\rightarrow0}\frac{1}{2\|h\|}_{\!H^1}\!\bigg|\int_{[0,1]}h^2\bigg|\leq \lim_{h\rightarrow0}\frac{1}{2}\frac{\|h\|_{H^1}^2}{\|h\|_{H^1}}=0\,,
\end{split}
\end{align}
where we have made use of the continuity of the norm $\|\cdot\|_{H^1}$.

The functional derivative of $K$ can be neatly interpreted and understood by computing its evaluation at every $x\in[0,1]$
\begin{align*}
\Ev_x(DK(u))&=\langle \mathcal{E}_x , DK(u)\rangle_{H^1}=\dd_u K(\mathcal{E}_x)=\int_{[0,1]}u\mathcal{E}_x\\
&=\frac{\cosh(1-x)}{\sinh 1}\int_0^xu(t)\cosh t\,\mathrm{d}t+\frac{\cosh x}{\sinh 1}\int_x^1u(t)\cosh(1-t)\,\mathrm{d}t\,.
\end{align*}

Several comments are in order now:
\begin{itemize}
  \item $K(u)$ can be written as $\|u\|^2_{L^2}$.
  \item It is also possible to write $K$ in the form
  \begin{align*}
  K=\frac{1}{2}\int_0^1\Ev_x^2\,\mathrm{d}x\,,
  \end{align*}
  where the integral should be understood as a Bochner integral. This just means
  \begin{align*}
  K(u)=\frac{1}{2}\int_0^1\Ev_x^2(u)\,\mathrm{d}x=\frac{1}{2}\int_0^1\widetilde{u}^2(x)\,\mathrm{d}x=\frac{1}{2}\int_{[0,1]}u^2\,.
  \end{align*}
  \item $K$ is differentiable both in the Fréchet and Regge-Teitelboim senses.
\end{itemize}

\medskip

\noindent \textbf{3) A different type of integral over $(0,1)$:} Let us consider now the non-linear function
\begin{align*}
V:H^1(0,1)\rightarrow \mathbb{R}:u\mapsto\frac{1}{2}\int_{[0,1]}(u')^2\,.
\end{align*}
This is well defined because $u\in H^1(0,1)$.

This is an interesting example because such a function would be considered as non-differentiable (or \textit{non-admissible}) in the Regge-Teitelboim sense. Indeed, the standard computation gives
\begin{align*}
\delta V=\int_0^1 u'(\delta u)'=u'\delta u\big|_0^1-\int_0^1u''\delta u=u'(1)\delta u(1)-u'(0)\delta u(0)-\int_0^1 u''\delta u\,,
\end{align*}
which is not of the form
\begin{align*}
\delta V=\int_0^1\frac{\delta V}{\delta u}\delta u\,,
\end{align*}
owing to the presence of boundary terms. Notice, anyway, that the preceding computation would not be justified if $u''$ is not defined [which it does not have to for a generic $u\in H^1(0,1)$].

We show now that $V$ is Fréchet-differentiable. Indeed, as
  \begin{align*}
  V(u+h)=\frac{1}{2}\int_{[0,1]}\big((u')^2+2u'h'+(h')^2\big)\,,
  \end{align*}
  it is natural to postulate that
  \begin{align*}
  \dd_u V(h)=\int_{[0,1]}u'h'\,.
  \end{align*}
  In order to show that this is indeed the Fréchet-differential of $V$ we have to check its linearity and continuity and also
  \begin{equation}\label{limit}
\lim_{h\rightarrow0}\frac{1}{\|h\|}_{\!H^1}\!\bigg|V(u+h)-V(u)-\int_{[0,1]}u'h' \bigg|=  \lim_{h\rightarrow0}\frac{1}{2\|h\|}_{\!H^1}\!\bigg|\int_{[0,1]}(h')^2\bigg|=0\,.
  \end{equation}
  The linearity of $\dd_u V$ is obvious. Continuity is proven by
  \begin{align*}
  \bigg|\int_{[0,1]}u'h'\bigg|&= \bigg|\int_{[0,1]}\big(u'h'+u h-u h\big)\bigg|=\big|\langle u,h\rangle_{H^1}-\langle u,h\rangle_{L^2}\big| \\
  &\leq\big|\langle u,h\rangle_{H^1}\big|+\big|\langle u,h\rangle_{L^2}\big| \leq\|u\|_{H^1}\|h\|_{H^1}+\|u\|_{L^2}\|h\|_{L^2}\\
  &\leq2\|u\|_{H^1}\|h\|_{H^1}\,,
  \end{align*}
where we have used the Cauchy-Schwarz inequality and the fact that $\|\psi\|_{L^2}\leq\|\psi\|_{H^1}$ for every $\psi\in H^1(0,1)$. Finally
\begin{align*}
0\leq\frac{1}{\|h\|}_{\!H^1}\!\bigg|\int_{[0,1]}(h')^2\bigg|\leq\frac{1}{\|h\|}_{\!H^1}\!\bigg|\int_{[0,1]}\big(h^2+(h')^2\big)\bigg|=\|h\|_{H^1}\,,
\end{align*}
so that the continuity of the norm immediately gives equation \eqref{limit}.

A convenient description of the functional derivative $DV(u)\in H^1(0,1)$ can be obtained by considering its evaluation for $x\in(0,1)$
\begin{align*}
   \Ev_x&(DV(u))=\langle\mathcal{E}_x,DV(u)\rangle_{H^1}=\dd_u V(\mathcal{E}_x)=\int_{[0,1]}u'\mathcal{E}'_x\\
  &=\langle\mathcal{E}_x,u\rangle_{H^1}-\langle\mathcal{E}_x,u\rangle_{L^2}= \widetilde{u}(x)-\langle\mathcal{E}_x,u\rangle_{L^2}
  \\&=\widetilde{u}(x)-\frac{\cosh (1-x)}{\sinh 1}\int_0^xu(t)\cosh t\mathrm{d}t-\frac{\cosh x}{\sinh1}\int_x^1u(t)\cosh(1-t)\mathrm{d}t\,.
\end{align*}

\subsection{The \texorpdfstring{$H^1(\mathbb{R})$}{H1(R)} Sobolev space}

Before we discuss the evaluation in $H^1(\mathbb{R})$ we will prove that every element of this space has a continuous and \textit{bounded} representative.

\medskip

\begin{lemma}
If $u\in H^1(\mathbb{R})$ then there exists $C>0$ such that $|\widetilde{u}(x)|<C$ for all $x\in\mathbb{R}$.
\end{lemma}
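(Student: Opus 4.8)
The plan is to prove the stronger, quantitative estimate $|\widetilde u(x)|\le \|u\|_{H^1}$ for every $x\in\mathbb{R}$, which immediately yields the statement with, say, $C=\|u\|_{H^1}+1$. The underlying idea is the classical argument for the embedding $H^1(\mathbb{R})\hookrightarrow C_0(\mathbb{R})$: square the function so as to obtain something whose weak derivative is integrable, apply the fundamental theorem of calculus, and control the boundary term at $-\infty$ by an integrability argument. Crucially, everything needed is already available from Theorems \ref{th_caclulus} and \ref{th_parts}, so no density or mollification argument is required.

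First I would set $w:=u^2$. By Theorem \ref{th_parts}, $w\in H^1(\mathbb{R})$ and $w'=2uu'$; moreover $w\in L^1(\mathbb{R})$ since $u\in L^2(\mathbb{R})$, and $w'\in L^1(\mathbb{R})$ since $\int_{\mathbb{R}}|u||u'|\le\|u\|_{L^2}\|u'\|_{L^2}<\infty$ by Cauchy--Schwarz. Theorem \ref{th_caclulus} then gives a continuous representative $\widetilde w\in C(\mathbb{R})$ with $\widetilde w(x)-\widetilde w(y)=\int_y^x w'(t)\,\d t$ for all $x,y\in\mathbb{R}$; since $\widetilde w$ and $(\widetilde u)^2$ are both continuous and coincide a.e., they coincide everywhere, so $\widetilde w=(\widetilde u)^2\ge 0$.

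Next I would show $\widetilde w(y)\to 0$ as $y\to-\infty$. Because $w'\in L^1(\mathbb{R})$, the integral $\int_{-\infty}^x w'(t)\,\d t:=\lim_{y\to-\infty}\int_y^x w'(t)\,\d t$ exists and is finite, hence $L:=\lim_{y\to-\infty}\widetilde w(y)=\widetilde w(x)-\int_{-\infty}^x w'(t)\,\d t$ exists; it satisfies $L\ge 0$ because $\widetilde w\ge 0$, and if $L>0$ then $\widetilde w\ge L/2$ on some half-line $(-\infty,y_0]$, forcing $\int_{\mathbb{R}}\widetilde w=\|u\|_{L^2}^2=\infty$, a contradiction, so $L=0$. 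Letting $y\to-\infty$ in the fundamental-theorem identity and using $w'=2uu'$ then gives, for every $x\in\mathbb{R}$,
\begin{align*}
(\widetilde u(x))^2=\widetilde w(x)=\int_{-\infty}^x 2u(t)u'(t)\,\d t\le 2\int_{\mathbb{R}}|u||u'|\le\int_{\mathbb{R}}\big(u^2+(u')^2\big)=\|u\|_{H^1}^2\,,
\end{align*}
whence $|\widetilde u(x)|\le\|u\|_{H^1}$ and $C=\|u\|_{H^1}+1$ works.

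The one delicate point is the decay of $\widetilde w$ at $-\infty$: one must not simply ``evaluate at $-\infty$'' but instead argue in two steps, that the limit exists (using $w'\in L^1$) and then that it is zero (using $w\in L^1$ together with continuity and non-negativity of $\widetilde w$). Everything else is a routine application of the integration-by-parts/product-rule results of Theorem \ref{th_parts}, the fundamental theorem of calculus of Theorem \ref{th_caclulus}, and the Cauchy--Schwarz inequality.
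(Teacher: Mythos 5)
Your proof is correct, but it follows a genuinely different route from the paper's. The paper never discusses the behaviour of $u$ at infinity: starting from $\widetilde u(x)=\widetilde u(x-y)+\int_0^y u'(x-\xi)\,\mathrm{d}\xi$ it averages over $y\in[0,1]$ and uses $|a|\le\tfrac12+\tfrac12a^2$ to arrive at $|\widetilde u(x)|\le 1+\tfrac12\|u\|_{H^1(\mathbb{R})}^2$, so only Theorem \ref{th_caclulus} and elementary inequalities are needed. You instead apply the fundamental theorem of calculus to $w=u^2$ and establish decay of $\widetilde w$ at $-\infty$ (existence of the limit from $w'\in L^1$, vanishing from $w\in L^1$ together with continuity and non-negativity); this buys the sharper, homogeneous estimate $|\widetilde u(x)|\le\|u\|_{H^1}$, i.e.\ the continuous embedding $H^1(\mathbb{R})\hookrightarrow L^\infty(\mathbb{R})$ with norm at most one, at the price of the extra limiting argument. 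One caveat: your first step invokes Theorem \ref{th_parts} on $I=\mathbb{R}$ to conclude $u^2\in H^1(\mathbb{R})$, and in the standard development (Brezis, Corollary 8.10) that product rule is itself deduced from the embedding $H^1(I)\subset L^\infty(I)$, i.e.\ essentially from the statement being proved; so, unless Theorem \ref{th_parts} is treated as a black box, your argument is circular at the level of the underlying references, whereas the paper's proof is independent of it. The repair is easy: all you actually need is the identity $\widetilde u(x)^2-\widetilde u(y)^2=2\int_y^x uu'$ for finite $y<x$, which follows from the integration-by-parts formula on the bounded interval $(y,x)$ (where boundedness of $\widetilde u$ is automatic by continuity on a compact set); combining it with your limit argument at $-\infty$ --- or simply with $\liminf_{y\to-\infty}\widetilde u(y)^2=0$, forced by the integrability of $\widetilde u^2$ --- then yields $(\widetilde u(x))^2\le 2\int_{\mathbb{R}}|u||u'|\le\|u\|_{H^1}^2$ exactly as you wrote.
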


\begin{proof}
We start by noting that, as a consequence of Theorem \ref{th_caclulus}, we have for every $y\in \mathbb{R}$
\begin{align*}
\widetilde{u}(x)=\widetilde{u}(x-y)+\int_0^yu'(x-\xi)\mathrm{d}\xi\,,
\end{align*}
which implies
\begin{align*}
|\widetilde{u}(x)|\leq |\widetilde{u}(x-y)|+\int_0^y|u'(x-\xi)|\mathrm{d}\xi\,.
\end{align*}
We average now in $y\in[0,1]$,
\begin{align*}
\int_0^1 |\widetilde{u}(x)|\mathrm{d}y\leq\int_0^1 |u(x-y)|\mathrm{d}y+\int_0^1 \int_0^y|u'(x-\xi)|\,\mathrm{d}\xi\, \mathrm{d}y\,,
\end{align*}
so that
\begin{align*}
|\widetilde{u}(x)|\leq \int_0^1 |u(x-y)|\mathrm{d}y+\int_0^1|u'(x-y)|\mathrm{d}y\,,
\end{align*}
where we have used the fact that for a fixed $x\in \mathbb{R}$ the maximum in $y\in[0,1]$ of the integral $\int_0^y|u'(x-\xi)|\,\mathrm{d}\xi$ is $\int_0^1|u'(x-\xi)|\,\mathrm{d}\xi$. Now, as $|u(x-y)|\leq\frac{1}{2}+\frac{1}{2}|u(x-y)|^2$ and $|u'(x-y)|\leq\frac{1}{2}+\frac{1}{2}|u'(x-y)|^2$ for every $x$ and $y$, we have for all $x\in\mathbb{R}$
\begin{align*}
|\widetilde{u}(x)|\leq1+\frac{1}{2}\int_0^1\big(|u(x-y)|^2+|u'(x-y)|^2\big)\mathrm{d}y\leq1+\frac{1}{2}\|u\|^2_{H^1(\mathbb{R})}<\infty\,,
\end{align*}
and, hence, we see that $\widetilde{u}$ is bounded.
\end{proof}

We study now the evaluation, that we denote $\Ev_x$ as before, and its Riesz-Fréchet representative $\mathcal{E}_x$. As we did above, we show that it is possible to write the evaluation as a scalar product in $H^1(\mathbb{R})$. In the present case we have
\begin{align*}
\mathcal{E}_x:\mathbb{R}\rightarrow \mathbb{R}:t\mapsto \frac{1}{2}e^{-|t-x|}\,.
\end{align*}
Indeed, it is easy to check that $\mathcal{E}_x\in H^1(\mathbb{R})$ because it is obviously an element of $L^2(\mathbb{R})$ and its weak derivative $\mathcal{E}'(t)=-\frac{1}{2}\mathrm{sgn}(t-x)e^{-|t-x|}$ is also an element of $L^2(\mathbb{R})$.

Let us take $u\in H^1(\mathbb{R})$, $x\in \mathbb{R}$ and compute
\begin{align*}
\langle\mathcal{E}_x,u\rangle_{H^1}&=\frac{1}{2}\int_{\mathbb{R}}e^{-|t-x|}\big(u(t)-\mathrm{sgn}(t-x)u'(t)\big)\mathrm{d}t\\
&=\frac{1}{2}\int_{-\infty}^x e^{t-x}\big(u(t)+u'(t)\big)+\frac{1}{2}\int_x^{\infty} e^{x-t}\big(u(t)-u'(t)\big)\\
&=\left.\frac{1}{2}\widetilde{u}(t)e^{t-x}\right|_{-\infty}^x-\left.\frac{1}{2}\widetilde{u}(t)e^{x-t}\right|_x^{\infty}\\
&=\widetilde{u}(x)-\frac{1}{2}\lim_{t\rightarrow-\infty}\widetilde{u}(t)e^{t-x}-\frac{1}{2}\lim_{t\rightarrow+\infty}\widetilde{u}(t)e^{x-t}=\widetilde{u}(x)\,,
\end{align*}
because $\widetilde{u}$ is bounded on the real line. As before, this proves that the evaluation is a linear and continuous function in $H^1(\mathbb{R})$. To end this section, we would like to draw the attention of the readers to the fact that, at variance with the first case considered in this section, some care is needed to work with Sobolev spaces defined on $\mathbb{R}$, in particular when performing integrations by parts.

\subsection{The \texorpdfstring{$\mathbbm{H}^2(\mathbb{R}^3)$}{H2(R3)} Sobolev space}

We discuss now the evaluation $\Ev^{(3)}_{\,\;\mathbf{x}}$ in the Sobolev space $\mathbbm{H}^2(\mathbb{R}^3)$ that we will use in our discussion of the holonomy-flux algebra. In the case of $H^1(I)$ or $H^1(\mathbb{R})$ we made use of the existence of continuous representatives for their elements. This allowed us to define the evaluation in terms of these representatives and prove its continuity. In the case of $H^2(\mathbb{R}^3)$ the second part of the Sobolev embedding theorem (see \cite{Adams}) tells us that $H^2(\mathbb{R}^3)$ is embedded in the H\"{o}lder space $C^{0,\lambda}(\mathbb{R}^3)$ (with $0<\lambda\leq1/2$), hence, also in $C^0(\mathbb{R}^3)$, so we have continuous representatives for the elements of $H^2(\mathbb{R}^3)$. The continuous representative of $u\in \mathbbm{H}^2(\mathbb{R}^3)$ will be denoted as $\widetilde{u}$. As the standard norm in $H^2(\mathbb{R}^3)$ is equivalent to the one that we are using for $\mathbbm{H}^2(\mathbb{R}^3)$ we can rely on the usual results, in particular the Sobolev embedding theorem, to work in $\mathbbm{H}^2(\mathbb{R}^3)$. Specifically, we will make use of the fact that $H^2(\mathbb{R}^3)\subset L^\infty(\mathbb{R}^3)$ and the inclusion is continuous. 

The main result of this subsection is the following

\medskip

\begin{proposition}
\label{11}
For a given $\mathbf{x_0} \in \mathbb{R}^3$ the function
\[
\mathcal{E}^{(3)}_{\;\mathbf{x_0}}(\mathbf{x}) = \frac{1}{8 \pi} e^{-\|\mathbf{x}-\mathbf{x_0}\|}
\]
is an element of $\mathbbm{H}^2(\mathbb{R}^3)$  and satisfies $\langle \mathcal{E}^{(3)}_{\;\mathbf{x_0}} , u \rangle_{\mathbbm{H}^2} = \widetilde{u}(\mathbf{x_0})$ for every $u\in\mathbbm{H}^2(\mathbb{R}^3)$ and $\mathbf{x_0}\in\mathbb{R}^3$.
\end{proposition}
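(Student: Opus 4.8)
The plan is to recognise $\mathcal{E}^{(3)}_{\;\mathbf{x_0}}$ as the fundamental solution of the fourth-order operator $(1-\Delta)^2$ based at $\mathbf{x_0}$, and then to rewrite the $\mathbbm{H}^2$ scalar product as an $L^2$ pairing obtained by moving all derivatives onto one entry. By translation invariance it suffices to treat the case $\mathbf{x_0}=\mathbf{0}$; write $\mathcal{E}:=\mathcal{E}^{(3)}_{\;\mathbf{0}}$ and $r:=\|\mathbf{x}\|$, so $\mathcal{E}=\frac{1}{8\pi}e^{-r}$.

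First I would record the two elementary distributional identities on $\mathbb{R}^3$: with $G(\mathbf{x}):=\frac{1}{4\pi r}e^{-r}$ the Yukawa kernel, one has $(1-\Delta)G=\delta_{\mathbf{0}}$ and $(1-\Delta)\mathcal{E}=G$. The first is classical ($e^{-r}/r$ is $(1-\Delta)$-harmonic away from the origin, where it has the Newtonian singularity $1/(4\pi r)$, which produces exactly $\delta_{\mathbf{0}}$). For the second, the radial Laplacian gives $\Delta e^{-r}=e^{-r}-2e^{-r}/r$ away from the origin, hence $(1-\Delta)e^{-r}=2e^{-r}/r$ there; since $e^{-r}$ and $\nabla e^{-r}=-(\mathbf{x}/r)e^{-r}$ are bounded, no distribution supported at $\{\mathbf{0}\}$ is generated, so $(1-\Delta)\mathcal{E}=\frac{1}{8\pi}\cdot\frac{2e^{-r}}{r}=G$ as distributions. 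Consequently $\Delta\mathcal{E}=\mathcal{E}-G$; since $G\in L^2(\mathbb{R}^3)$ (the $1/r$ singularity is square integrable in three dimensions, and $e^{-r}$ decays at infinity) and $\mathcal{E}\in L^2(\mathbb{R}^3)$, standard elliptic regularity ($w,\Delta w\in L^2(\mathbb{R}^3)\Rightarrow w\in H^2(\mathbb{R}^3)$, immediate from the Fourier characterisation of $H^2$) yields $\mathcal{E}\in H^2(\mathbb{R}^3)=\mathbbm{H}^2(\mathbb{R}^3)$, which is the first assertion.

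Next I would establish the integration-by-parts identity
\[
\langle u,v\rangle_{\mathbbm{H}^2}=\big\langle (1-\Delta)u,\,(1-\Delta)v\big\rangle_{L^2(\mathbb{R}^3)}\,,\qquad\forall\,u,v\in\mathbbm{H}^2(\mathbb{R}^3)\,.
\]
For $u,v\in C_c^\infty(\mathbb{R}^3)$ this is a direct computation: integrating by parts gives $\int\sum_{i,j}\partial_i\partial_j u\,\partial_i\partial_j v=\int\Delta u\,\Delta v=\int u\,\Delta^2 v$ and $2\int\sum_i\partial_i u\,\partial_i v=-2\int u\,\Delta v$, so the left-hand side equals $\int u\,(1-\Delta)^2 v=\int(1-\Delta)u\,(1-\Delta)v$; both sides are continuous bilinear forms on $\mathbbm{H}^2(\mathbb{R}^3)\times\mathbbm{H}^2(\mathbb{R}^3)$ (recall $\|\cdot\|_{\mathbbm{H}^2}$ is equivalent to the standard $H^2$ norm, and $\|(1-\Delta)w\|_{L^2}$ is controlled by $\|w\|_{H^2}$), so the identity extends to all of $\mathbbm{H}^2(\mathbb{R}^3)$ by density of $C_c^\infty(\mathbb{R}^3)$. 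Applying it with $v=\mathcal{E}$ and using $(1-\Delta)\mathcal{E}=G$ and the fact that $G$ is even,
\[
\langle u,\mathcal{E}\rangle_{\mathbbm{H}^2}=\big\langle (1-\Delta)u,\,G\big\rangle_{L^2}=\int_{\mathbb{R}^3} G(\mathbf{y})\,\big((1-\Delta)u\big)(\mathbf{y})\,\mathrm{d}^3y=\big(G*(1-\Delta)u\big)(\mathbf{0})\,.
\]
Since $G$ is the fundamental solution of $1-\Delta$, the convolution $G*(\cdot)$ realises $(1-\Delta)^{-1}$ on $L^2(\mathbb{R}^3)$, so $G*(1-\Delta)u=u$ in $\mathbbm{H}^2(\mathbb{R}^3)$; moreover $G\in L^2(\mathbb{R}^3)$ and Cauchy--Schwarz give $|(G*f)(\mathbf{x})|\le\|G\|_{L^2}\|f\|_{L^2}$, while $L^2$-continuity of translations shows $G*f$ is continuous, so $G*(1-\Delta)u$ is precisely the continuous representative $\widetilde{u}$. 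Hence $\langle u,\mathcal{E}\rangle_{\mathbbm{H}^2}=\widetilde{u}(\mathbf{0})$, and undoing the translation gives $\langle\mathcal{E}^{(3)}_{\;\mathbf{x_0}},u\rangle_{\mathbbm{H}^2}=\widetilde{u}(\mathbf{x_0})$ for every $\mathbf{x_0}\in\mathbb{R}^3$.

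The step I expect to be most delicate is the careful handling of the point singularity at the origin, i.e. verifying that the classical computations of $\Delta e^{-r}$ and $\Delta(e^{-r}/r)$ are correct as \emph{distributional} identities (that $e^{-r}$ contributes no $\delta$ while $e^{-r}/r$ contributes exactly $\delta_{\mathbf 0}$), together with the identification of $G*(1-\Delta)u$ with the pointwise continuous representative of $u$. A more hands-on route, closer in spirit to the $H^1(\mathbb{R})$ computation carried out above, is to excise a small ball $B_\varepsilon(\mathbf{x_0})$, integrate by parts on $\mathbb{R}^3\setminus B_\varepsilon(\mathbf{x_0})$ where $(1-\Delta)^2\mathcal{E}^{(3)}_{\;\mathbf{x_0}}=0$, and show that of the surface integrals over $\partial B_\varepsilon(\mathbf{x_0})$ only one survives as $\varepsilon\to0$ and reproduces $\widetilde{u}(\mathbf{x_0})$, the remaining ones vanishing because they scale at worst like $\varepsilon^2\cdot\varepsilon^{-1}$.
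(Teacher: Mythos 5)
Your proof is correct, but it follows a genuinely different route from the paper's. The paper proves membership in $\mathbbm{H}^2(\mathbb{R}^3)$ by explicitly computing $\partial_i\mathcal{E}^{(3)}_{\;\,\mathbf{0}}$ and $\partial_i\partial_j\mathcal{E}^{(3)}_{\;\,\mathbf{0}}$ and checking each is in $L^2$, and then establishes $\langle\mathcal{E}^{(3)}_{\;\,\mathbf{0}},\varphi\rangle_{\mathbbm{H}^2}=\varphi(0)$ for $\varphi\in C_c^\infty(\mathbb{R}^3)$ by a hands-on computation: integrating by parts the second-derivative terms, passing to spherical coordinates, recognising a total radial derivative and evaluating the boundary term; the extension to general $u\in\mathbbm{H}^2(\mathbb{R}^3)$ is then done by density together with the continuous embedding $\mathbbm{H}^2(\mathbb{R}^3)\hookrightarrow L^\infty(\mathbb{R}^3)$, which upgrades $\mathbbm{H}^2$-convergence of $\varphi_n\to u$ to uniform (hence pointwise) convergence to the continuous representative $\widetilde{u}$. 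You instead exploit the structural fact that the nonstandard factor $2$ in the scalar product makes $\langle u,v\rangle_{\mathbbm{H}^2}=\langle(1-\Delta)u,(1-\Delta)v\rangle_{L^2}$, identify $\mathcal{E}^{(3)}_{\;\,\mathbf{0}}$ as the fundamental solution of $(1-\Delta)^2$ through the Yukawa kernel $G$, get $H^2$-membership from $\Delta\mathcal{E}=\mathcal{E}-G\in L^2$ and the Fourier characterisation, and obtain the reproducing property from $G*(1-\Delta)u=u$, with continuity of $G*f$ (via $G\in L^2$ and $L^2$-continuity of translations) delivering the continuous representative directly, so you never need the $L^\infty$ embedding at the final stage. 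Your route explains \emph{why} the kernel is $\frac{1}{8\pi}e^{-r}$ rather than just verifying it, and generalises readily to other operators of the form $p(-\Delta)$; the paper's route is more elementary and self-contained, using only explicit integration and standard density arguments, which fits its expository aim. Two small points you should record to make your version airtight: note $G\in L^1(\mathbb{R}^3)\cap L^2(\mathbb{R}^3)$ so that $G*(1-\Delta)u$ is well defined and the Fourier (or $\delta$-convolution) identity $G*(1-\Delta)u=u$ is legitimate, and back the claim that $\Delta e^{-r}$ produces no distribution supported at the origin with the excision estimate you already sketch (surface terms of order $\varepsilon^{2}$ times bounded integrands); both are standard and do not affect the validity of the argument.
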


\begin{proof}
In order to prove this claim notice that it suffices to consider $\mathbf{x_0} = 0$ because we can get the general result by a translation. In the following we will write $r :=\|\mathbf{x}\|_{\mathbb{R}^3}$ and make use of

\begin{align*}
&\mathcal{E}^{(3)}_{\;\,\mathrm{0}}(\mathbf{x}) = \frac{1}{8\pi} e^{-r}\,,\quad\partial_i \mathcal{E}^{(3)}_{\;\,\mathrm{0}}(\mathbf{x}) = -\frac{x_i}{r} \mathcal{E}^{(3)}_{\;\mathrm{0}}(\mathbf{x})\,,\\
&\partial_i \partial_j \mathcal{E}^{(3)}_{\;\,\mathrm{0}}(\mathbf{x}) = \left( \frac{1}{r^2} \Big( 1 + \frac{1}{r} \Big) x_i x_j - \frac{1}{r} \delta_{ij} \right) \mathcal{E}^{(3)}_{\mathrm{0}}(\mathbf{x})\,.
\end{align*}
It is straightforward to see that $\mathcal{E}^{(3)}_{\;\,\mathrm{0}}$, $\partial_i \mathcal{E}^{(3)}_{\;\,\mathrm{0}}$ and $\partial_i \partial_j \mathcal{E}^{(3)}_{\;\,\mathrm{0}}$ are in $L^2(\mathbb{R}^3)$, hence $\mathcal{E}^{(3)}_{\;\,\mathrm{0}}\in\mathbbm{H}^2(\mathbb{R}^3)$. In fact, we have
\begin{align*}
\|\mathcal{E}^{(3)}_{\;\,\mathrm{0}}\|_{\mathbbm{H}^2}^2=2\int_{\mathbb{R}^3}(\mathcal{E}^{(3)}_{\;\,\mathrm{0}})^2\Big(2+\frac{1}{r^2}\Big)=\frac{1}{8\pi}=\mathcal{E}^{(3)}_{\;\,\mathrm{0}}(0)\,.
\end{align*}
Let us take now a smooth function with compact support $\varphi\in C_c^\infty(\mathbb{R}^3)$ and compute the scalar product
\begin{align*}
\langle \mathcal{E}^{(3)}_{\;\,\mathrm{0}} ,\varphi\rangle_{\mathbbm{H}^2} &= \int_{\mathbb{R}^3} \mathcal{E}^{(3)}_{\;\,\mathrm{0}} \left(\varphi- 2  \sum_{i=1}^3 \frac{x_i}{r} \partial_i\varphi+  \sum_{i, j=1}^3 \left( \frac{1}{r^2} \left( 1 + \frac{1}{r} \right) x_i x_j - \frac{1}{r} \delta_{ij} \right) \partial_i \partial_j\varphi\right) \\
&=\int_{\mathbb{R}^3} \mathcal{E}^{(3)}_{\;\,\mathrm{0}} \left(\varphi- 2  \sum_{i=1}^3 \frac{x_i}{r} \partial_i\varphi+ \frac{1}{r^2}  \sum_{i, j=1}^3 x_ix_j\partial_i\partial_j \varphi+ \frac{1}{r^3} \sum_{i, j=1}^3 x_ix_j\partial_i\partial_j\varphi- \frac{1}{r}\Delta\varphi\right)\\
&=\int_{\mathbb{R}^3} \mathcal{E}^{(3)}_{\;\,\mathrm{0}} \left(\varphi-\frac{2}{r}\Big(1+\frac{1}{r^2}\Big) \sum_{i=1}^3x_i\partial_i \varphi+\frac{1}{r^2} \sum_{i, j=1}^3 x_ix_j\partial_i\partial_j \varphi\right)\\
&\ \ \ \ + \sum_{i, j=1}^3 \partial_i\left(\mathcal{E}^{(3)}_{\;\,\mathrm{0}}\frac{1}{r}\left(\frac{x_ix_j}{r^2}-\delta_{ij}\right)\partial_j \varphi\right)\,.
\end{align*}
where we have integrated by parts the last two terms in the integral of the second line. As $\varphi\in C_c^\infty(\mathbb{R}^3)$ the integral of the divergence term in the previous expression vanishes. Using spherical coordinates, $\partial_r\varphi= \sum_{i=1}^3\frac{x_i}{r} \partial_i \varphi$, and $\partial^2_r\varphi= \sum_{i,j=1}^3\frac{x_i x_j}{r^2} \partial_i \partial_j \varphi$ we get
\begin{align*}
\langle \mathcal{E}^{(3)}_{\;\,\mathrm{0}} ,\varphi\rangle_{\mathbbm{H}^2} = \int_{\mathbb{R}^3} \mathcal{E}^{(3)}_{\mathrm{0}} \left(\varphi- 2 \left( 1 + \frac{1}{r^2} \right) \partial_r\varphi+ \partial^2_r\varphi\right)\,.
\end{align*}
Using the expression for $\mathcal{E}^{(3)}_{\;\,\mathrm{0}}$ and denoting by $d\sigma$ the volume-form of  the unit sphere $S^2$, we find
\begin{align*}
\langle \mathcal{E}^{(3)}_{\;\,\mathrm{0}} ,\varphi\rangle_{\mathbbm{H}^2} &= \frac{1}{8\pi} \int_{S^2} d\sigma \int_0^\infty dr \ e^{-r} \left( r^2\varphi- 2 \left( 1 + r^2 \right) \partial_r\varphi+ r^2 \partial^2_r\varphi\right) =\\
&=  \frac{1}{8\pi} \int_{S^2} d\sigma \int_0^\infty dr\, \partial_r \Big(  r^2\partial_r(e^{-r}\varphi)-2\varphi e^{-r} (1 + r)\Big)\\
&= \frac{1}{8\pi} \int_{S^2} d\sigma \Big[r^2\partial_r(e^{-r}\varphi)-2\varphi e^{-r} (1 + r) \Big]^\infty_0 \\ 
&= \frac{1}{8\pi} 2 \varphi(0) \int_{S^2} d\sigma = \varphi(0)\,.
\end{align*}
In the previous computation we have taken advantage of the fact that $\varphi$ is smooth with compact support. 

To complete the proof (see details in \ref{app_2}) we have to show that the same result holds for any $u\in \mathbbm{H}^2(\mathbb{R}^3)$. To this end let us take a sequence of smooth functions with compact support $\varphi_n\in C_c^\infty(\mathbb{R}^3)$ converging to $u$ in $\mathbbm{H}^2(\mathbb{R}^3)$ (the existence of such a sequence is guaranteed because $C_c^\infty(\mathbb{R}^3)$ is dense in $\mathbbm{H}^2(\mathbb{R}^3)$).

According to the computation above, for every $\mathbf{x}\in\mathbb{R}^3$ we have
\[
\varphi_n(\mathbf{x})=\langle \mathcal{E}^{(3)}_{\;\mathbf{x}},\varphi_n\rangle_{\mathbbm{H}^2}
\]
which implies
\[
\lim_{n\rightarrow\infty}\varphi_n(\mathbf{x})=\lim_{n\rightarrow\infty}\langle \mathcal{E}^{(3)}_{\;\mathbf{x}},\varphi_n\rangle_{\mathbbm{H}^2}=\langle \mathcal{E}^{(3)}_{\;\mathbf{x}},u\rangle_{\mathbbm{H}^2}
\]
as  a consequence of the continuity of the scalar product.

The fact that $\varphi_n\in C_c^\infty(\mathbb{R}^3)$ converges to $u$ in $\mathbbm{H}^2(\mathbb{R}^3)$ and the continuous inclusion of $\mathbbm{H}^2(\mathbb{R}^3)$ into $L^\infty(\mathbb{R}^3)$ implies that the sequence also converges in $L^\infty(\mathbb{R}^3)$, i.e. for every $\varepsilon>0$ there exists $n_0\in\mathbb{N}$ such that $\|\varphi_n-u\|_\infty<\varepsilon$ if $n>n_0$. By taking the continuous representative $\widetilde{u}$ of $u$ we have that
\[
\|\varphi_n-u\|_\infty=\|\varphi_n-\widetilde{u}\|_\infty=\sup_{\mathbf{x}\in \mathbb{R}^3}|\varphi_n(\mathbf{x})-\widetilde{u}(\mathbf{x})|
\]
as a consequence of the fact that, for real continuous functions on $\mathbb{R}^3$, $\sup f=\mathrm{ess}\sup f$. This means that the sequence $\varphi_n$ converges \textit{uniformly} to $\widetilde{u}$ and, hence, also pointwise: $\lim_{n\rightarrow\infty}\varphi_n(\mathbf{x})=\widetilde{u}(\mathbf{x})$. We then conclude
\[
\widetilde{u}(\mathbf{x})=\langle \mathcal{E}^{(3)}_{\;\mathbf{x}},u\rangle_{\mathbbm{H}^2}\,.
\]

\end{proof}

An interesting observation is that it is not possible to directly guess the form of the representative of the evaluation $\Ev^{(3)}_{\,\;\mathbf{x}}$ in $\mathbbm{H}^2(\mathbb{R}^3)$ from that of the evaluation $\Ev_x$ in $H^1(\mathbb{R})$ (or in $H^1(\mathbb{R})$ for that matter), in particular $\mathcal{E}^{(3)}_{\mathbf{x}}$ cannot be simply written as $\left( \mathcal{E}_x\circ \pi_1\right)\cdot\left(\mathcal{E}_y\circ \pi_2\right)\cdot\left(\mathcal{E}_z\circ \pi_3\right)$ for $\mathbf{x}=(x,y,z)\in \mathbb{R}^3$ and $\pi_i : \mathbb{R}^3 \rightarrow \mathbb{R}$ being the projection to the corresponding factor.

%
%
\section{Poisson brackets in one dimensional examples}{\label{sec_one_dim_examples}}

In this section, we  discuss several sample computations of Poisson brackets in the phase space $H^1(0,1)\times H^1(0,1)^*\cong H^1(0,1)\times H^1(0,1)$. Our goal is to show that, despite what many statements found in the literature affirm, there is no problem in carrying out such computations, even if boundaries are present. In the following, whenever convenient, we will use the shorthand $H^1$ instead of $H^1(0,1)$.

Let us introduce the projections
\begin{align*}
  &\mathrm{proj}_1: H^1(0,1)\times H^1(0,1)\rightarrow  H^1(0,1):(\phi,\pi)\mapsto \phi\,,\\
  &\mathrm{proj}_2: H^1(0,1)\times H^1(0,1)\rightarrow  H^1(0,1):(\phi,\pi)\mapsto \pi\,,
\end{align*}
and
\begin{align*}
\Phi_x:=\Ev_x\circ\mathrm{proj}_1\,,\quad \Pi_x:=\Ev_x\circ\mathrm{proj}_2\,,
\end{align*}
with $x\in[0,1]$. These ``partial'' evaluations are real \textit{differentiable} functions in the phase space $H^1(0,1)\times H^1(0,1)$ because both projections $\mathrm{proj}_1$, $\mathrm{proj}_2$ and the evaluation $\Ev_x$ are differentiable. The definitions of $\Phi_x$ and $\Pi_x$ are unambiguous as a consequence of the uniqueness of the evaluation $\Ev_x$.

We compute now their Poisson bracket according to equation \eqref{PB_final}
\begin{align*}
\{\Phi_x,\Pi_y\}=\langle D_1\Phi_x,D_2 \Pi_y\rangle_{H^1}-\langle D_1\Pi_x,D_2 \Phi_y\rangle_{H^1}\,.
\end{align*}
To this end we need the relevant functional derivatives. In order to get $D_1\Phi_x$ we first compute
\begin{equation}\label{diff_1}
\dd_{(\phi,\pi)}\Phi_x=\dd_{(\phi,\pi)}(\Ev_x\circ\mathrm{proj}_1)=\dd_\phi\Ev_x\circ \dd_{(\phi,\pi)} \mathrm{proj}_1=\dd_\phi\Ev_x\circ\mathrm{proj}_1
\end{equation}
where we have made use of the chain rule and the fact that the projection $\mathrm{proj}_1$ is linear and continuous (hence, its own differential). For all $h\in H^1(0,1)$, and using Definition \ref{def_partial_functional_derivative} of partial functional derivative, we have
\begin{align*}
\dd_{(\phi,\pi)}\Phi_x(h,0)=\langle D_1 \Phi_x(\phi,\pi),h\rangle_{H^1}\,.
\end{align*}
Now, from equation \eqref{diff_1} we see that
\begin{align*}
\dd_{(\phi,\pi)}\Phi_x(h,0)=\dd_\phi\Ev_x(h)=\langle \mathcal{E}_x,h\rangle_{H^1}\,,
\end{align*}
and, hence,
\begin{align*}
D_1 \Phi_x(\phi,\pi)=\mathcal{E}_x\,.
\end{align*}
A similar computation gives $D_2 \Pi_y(\phi,\pi)=\mathcal{E}_y$. The computations of $D_2 \Phi_x$ and $D_1 \Pi_y$ are analogous. For instance,
\begin{align*}
\dd_{(\phi,\pi)}\Phi_x(0,h)=\langle D_2 \Phi_x(\phi,\pi),h\rangle_{H^1}\,.
\end{align*}
As
\begin{align*}
\dd_{(\phi,\pi)}\Phi_x(0,h)=\dd_\phi\Ev_x(\mathrm{proj}_1(0,h))= \dd_\phi\Ev_x(0)=0\,,
\end{align*}
we have $D_2\Phi_x(\phi,\pi)=0$ and, analogously, $D_1\Pi_x(\phi,\pi)=0$.

With all this information we then conclude
\begin{align}\label{basic_Poisson_brackets}
\{\Phi_x,\Pi_y\}&=\langle D_1\Phi_x,D_2 \Pi_y\rangle_{H^1}-\langle D_1\Pi_x,D_2 \Phi_y\rangle_{H^1}\\
&=\langle\mathcal{E}_x,\mathcal{E}_y\rangle_{H^1}=\mathcal{E}_x(y)=\mathcal{E}_y(x)\,.\nonumber
\end{align}
Several comments are in order now
\begin{itemize}
\item By using the functional derivatives computed above it is straightforward to see that $\{\Phi_x,\Phi_y\}=0$ and $\{\Pi_x,\Pi_y\}=0$. 
\item Both $\Phi_x$ and $\Pi_x$ are real functions in phase space. Their Poisson brackets are also real functions in phase space. Notice that, in the present example, these functions are \textit{constant} because they do not depend on the phase space point $(\phi,\pi)$.
\item It is very important to remember that the Poisson brackets \eqref{basic_Poisson_brackets} \textit{are completely determined by the canonical symplectic form} in $H^1(0,1)\times H^1(0,1)^*$. These can be interpreted as the basic Poisson brackets in this phase space and play the role of the $\{\phi(x), \pi(y)\}=\delta(x,y)$ in the usual presentations of the Hamiltonian formulation of field theories, with the replacement of $\delta(x,y)$ by $\mathcal{E}_x(y)$.
\item It is very important to bear in mind that in the example just discussed all the objects that we have used are suitably regular, in particular $\mathcal{E}_x$, which is an element of $H^1(0,1)$. Notice also that, at variance with the standard interpretation of the basic Poisson brackets for the scalar field, now $\{\Phi_x,\Pi_y\}$ is \textit{never} zero.
\end{itemize}

After computing the Poisson brackets of the \textit{canonical evaluations} $\Phi_x$ and $\Pi_y$ we compute now the Poisson brackets of these objects with the following real functions in the phase space $H^1(0,1)\times H^1(0,1)$
\begin{align}\label{KV_phase_space}
  & \mathcal{K}=K\circ \mathrm{proj}_1\,, \\
  & \mathcal{V}=V\circ \mathrm{proj}_1\,,
\end{align}
defined in terms of the functions $K$ and $V$ introduced in Section \ref{sec_sample_spaces}.

\medskip

We compute $\{\Phi_x,\mathcal{K}\}$ as
\begin{align*}
\{\Phi_x,\mathcal{K}\}=-\dd\mathcal{K}(\mathbb{X}_{\Phi_{\!x}}\!)\,,
\end{align*}
where $\mathbb{X}_{\Phi_{\!x}}\in \mathfrak{X}(H^1(0,1)\times H^1(0,1))$ is the Hamiltonian vector field determined by
\begin{align*}
\Omega(\mathbb{X}_{\Phi_{\!x}},\mathbb{Y})=\dd\Phi_x(\mathbb{Y})\,,\quad \forall \mathbb{Y}\in H^1(0,1)\times H^1(0,1)\,.
\end{align*}
In order to compute $\mathbb{X}_{\Phi_{\!x}}$ we write $\mathbb{X}_{\Phi_x}\!\!=(X_1,X_2)$, $\mathbb{Y}=(Y_1,Y_2)$. Now,
\begin{align*}
  & \Omega(\mathbb{X}_{\Phi_{\!x}},\mathbb{Y})\!=\langle Y_2,X_1 \rangle_{H^1}-\langle X_2,Y_1 \rangle_{H^1}\,,\phantom{\Big(} \\
  & \dd\Phi_x(\mathbb{Y})=(\dd\Ev_x\circ\mathrm{proj}_1)(\mathbb{Y})=\dd\Ev_x(Y_1)=\langle\mathcal{E}_x,Y_1\rangle_{H^1}\,,
\end{align*}
hence, $\mathbb{X}_{\Phi_{\!x}}\!\!=(0,-\mathcal{E}_x)$, and, remembering that the Poisson bracket is a real function in phase space,
\begin{align*}
\{\Phi_x,\mathcal{K}\}(\phi,\pi)&=-\dd_{(\phi,\pi)}\mathcal{K}(\mathbb{X}_{\Phi_{\!x}})=-\big(\dd_\phi K\circ \mathrm{proj}_1\big)(\mathbb{X}_{\Phi_{\!x}}\!)\\
&=-\dd_\phi K(X_1)=-\dd_\phi K(0)=0\,.
\end{align*}
We then conclude $\{\Phi_x,\mathcal{K}\}=0$.

\medskip

The computation of $\,\{\Pi_y,\mathcal{K}\}$ is analogous. First we write
\begin{align*}
\{\Pi_y,\mathcal{K}\}=-\dd\mathcal{K}(\mathbb{X}_{\Pi_{\!y}}\!)\,,
\end{align*}
where $\mathbb{X}_{\Pi_{\!y}}$ is the Hamiltonian vector field defined by $\Pi_y$. Writing as above $\mathbb{X}_{\Pi_y}\!\!=(X_1,X_2)$ and $\mathbb{Y}=(Y_1,Y_2)$ we have
\begin{align*}
  & \Omega(\mathbb{X}_{\Pi_{\!y}},\mathbb{Y})\!=\langle Y_2,X_1 \rangle_{H^1}-\langle X_2,Y_1 \rangle_{H^1}\,,\phantom{\Big(} \\
  & \dd\Pi_y(\mathbb{Y})=(\dd\Ev_y\circ\mathrm{proj}_2)(\mathbb{Y})=\dd\Ev_y(Y_2)=\langle\mathcal{E}_y,Y_2\rangle_{H^1}\,,
\end{align*}
hence, $\mathbb{X}_{\Pi_{\!y}}\!\!=(\mathcal{E}_y,0)$. Now
\begin{align*}
\{\Pi_y,\mathcal{K}\}(\phi,\pi)&=-\dd_{(\phi,\pi)}\mathcal{K}(\mathbb{X}_{\Pi_{\!y}}\!)=-\big(\dd_\phi K\,\circ\, \mathrm{proj}_1\big)(\mathbb{X}_{\Pi_{\!y}})\\
&=-\dd_\phi K(X_1)=-\dd_\phi K(\mathcal{E}_y)=-\int_{[0,1]}\!\!\mathcal{E}_y\phi\,.
\end{align*}
This can also be written in the form
\begin{align*}
\{\Pi_y,\mathcal{K}\}=-\int_0^1\!\!\mathcal{E}_y(x)\Phi_x\, \mathrm{d}x\,.
\end{align*}
Before considering other examples we make some comments:
\begin{itemize}
  \item It is straightforward to prove that the Poisson brackets computed above are real Fréchet-differentiable functions in the phase space $H^1(0,1)\times H^1(0,1)$.
  \item As expected, there is a direct and streamlined way to perform the previous computations:
  \begin{align}\label{Poisson_comp_streamlined}
    \{\Phi_x,\mathcal{K}\} & =\{\Phi_x,\frac{1}{2}\int_0^1\!\!\Phi_y^2\mathrm{d}y\}=\int_0^1\!\!\Phi_y\{\Phi_x,\Phi_y\}\,\mathrm{d}y=0\,,   \\
    \{\Pi_x,\mathcal{K}\} & =\{\Pi_x,\frac{1}{2}\int_0^1\!\!\Phi_y^2\mathrm{d}y\}=\int_0^1\!\!\Phi_y\{\Pi_x,\Phi_y\}\,\mathrm{d}x=-\int_0^1\!\!\mathcal{E}_x(y)\Phi_y \mathrm{d}y\,,
  \end{align}
  where we have used the well-known properties of the Poisson bracket. The reason why we have followed a slightly longer path above is to highlight how the computations can be performed in terms of Fréchet differentials and the role played by the scalar product in $H^1(0,1)$.
  \item The fact that our fields and momenta are defined in a manifold with boundary (the interval $(0,1)$) has very little impact in the previous computations: nothing strange happens at the points $0,1$.
\end{itemize}

\medskip

In the following we will perform similar computations with the function $\mathcal{V}$ which is not RT-differentiable. As we will see, there are no obstructions to get the Poisson brackets despite the purported lack of regularity and the presence of boundaries.

\medskip

We start with $\{\Phi_x,\mathcal{V}\}=-\dd\mathcal{V}(\mathbb{X}_{\Phi_{\!x}})$. As before we have $\mathbb{X}_{\Phi_{\!x}}=(0,-\mathcal{E}_x)$ and, hence,
\begin{align*}
\{\Phi_x,\mathcal{V}\}(\phi,\pi)&=-\dd_{(\phi,\pi)}\mathcal{V}(\mathbb{X}_{\Phi_{\!x}}\!)=-\big(\dd_\phi V\circ \mathrm{proj}_1\big)(\mathbb{X}_{\Phi_{\!x}}\!)\\
&=-\dd_\phi V(X_1)=-\dd_\phi V(0)=0\,,\qquad
\end{align*}
i.e. $\{\Phi_x,\mathcal{V}\}=0$.

\medskip

The computation of $\{\Pi_y,\mathcal{V}\}=-\dd\mathcal{V}(\mathbb{X}_{\Pi_{\!y}})$ is carried out in a completely analogous way by taking into account that we have $\mathbb{X}_{\Pi_{\!y}}=(\mathcal{E}_y,0)$ and
\begin{align*}
\{\Pi_y,\mathcal{V}\}(\phi,\pi)\!&=\!-\dd_{(\phi,\pi)}\mathcal{V}(\mathbb{X}_{\Pi_{\!y}}\!)\!=\!-\big(\dd_\phi V\,\circ\, \mathrm{proj}_1\big)(\mathbb{X}_{\Pi_{\!y}}\!)\!=\!-\dd_\phi V(X_1) \\
\!&=\!-\dd_\phi V(\mathcal{E}_y)\!=\!-\int_{[0,1]}\!\!\!\mathcal{E}_y'\phi'\,.
\end{align*}
Several comments are in order now
\begin{itemize}
  \item There are no obstructions to perform the previous computations, in particular, the presence of the boundary does not introduce any complications.
  \item The resulting Poisson brackets are Fréchet-differentiable functions. In particular, $\{\Pi_y,\mathcal{V}\}$ is differentiable because it can be written as $F=f\circ \mathrm{proj}_1$ with
  \begin{align*}
  f:H^1(0,1)\rightarrow \mathbb{R}:u\mapsto -\int_{[0,1]}\!\!\!\mathcal{E}_y'u'\,,
  \end{align*}
linear and continuous because
\begin{align*}
|f(u)|&=\bigg|\int_{[0,1]}\mathcal{E}_y'u'\bigg|=\big|\langle\mathcal{E}_y,u\rangle_{H^1}-\langle\mathcal{E}_y,u\rangle_{L^2} \big|\leq\big|\langle\mathcal{E}_y,u\rangle_{H^1}\big|+\big|\langle\mathcal{E}_y,u\rangle_{L^2}\big|\\
&\leq\|\mathcal{E}_y\|_{H^1}\|u\|_{H^1}+\|\mathcal{E}_y\|_{L^2}\|u\|_{L^2}\leq2\|\mathcal{E}_y\|_{H^1}\|u\|_{H^1}\,.
\end{align*}
  \item The impossibility of defining the derivative of the evaluation $\Phi_x'$ as a continuous linear functional in $H^1(0,1)$ (see \ref{app_1}) precludes us from writing an expression such as
  \begin{align*}
  \mathcal{V}=\frac{1}{2}\int_0^1 (\Phi_x')^2\mathrm{d}x\,,
  \end{align*}
 and use it to compute Poisson brackets.
\end{itemize}

%
%
\section{The mock holonomy-flux algebra}{\label{sec_mock_HFalgebra}}

The purpose of this section is to introduce a phase space modelled on Sobolev spaces of the types introduced above where it is possible to define holonomy and flux variables which mimic those used in LQG. The main goal of this exercise is to show that, with some care, and relying on the structures provided by the functional spaces where the basic variables are defined, it is possible to study their algebra without encountering any surprises (for instance of the type found in \cite{ACZ} when looking at the Jacobi identity to justify the necessity of having non-commuting fluxes).

Let us introduce the phase space $\mathbbm{H}^2(\mathbb{R}^3)^9\times \mathbbm{H}^2(\mathbb{R}^3)^{9*}\cong \mathbbm{H}^2(\mathbb{R}^3)^9\times \mathbbm{H}^2(\mathbb{R}^3)^9$ with elements that we will write as $(A_a^i,\hat{E}^b_j)$. The indices $a,b=1,2,3$ refer to a Cartesian coordinate system in $\mathbb{R}^3$, $i,j=1,2,3$ are internal indices. These variables will play the role of an $SO(3)$ connection and its canonically conjugate momentum on the spatial manifold $\mathbb{R}^3$.

Holonomies and fluxes will be defined with the help of the evaluation $\Ev^{(3)}_{\;\,\mathbf{x}}$ in $\mathbbm{H}^2(\mathbb{R}^3)$, discussed in Section \ref{sec_sample_spaces}, and the projections onto the different factors in each copy of $\mathbbm{H}^2(\mathbb{R}^3)^9$
\begin{align*}
  \mathrm{proj}_{Aa}^{\phantom{A}\,i}&:\mathbbm{H}^2(\mathbb{R}^3)^9\times \mathbbm{H}^2(\mathbb{R}^3)^9\rightarrow \mathbbm{H}^2(\mathbb{R}^3):(A,\hat{E})\mapsto A_a^i\,,\\
  \mathrm{proj}_{E\,i}^{\phantom{E}a}&:\mathbbm{H}^2(\mathbb{R}^3)^9\times \mathbbm{H}^2(\mathbb{R}^3)^9\rightarrow \mathbbm{H}^2(\mathbb{R}^3):(A,\hat{E})\mapsto \hat{E}^a_i\,,
\end{align*}
where $A_a^i$ and $\hat{E}^a_i$ denote the components of the connection $A$ and the triad $\hat{E}$.

For $\mathbf{x}\in \mathbb{R}^3$ let us define now
\begin{align*}
  \mathcal{A}_{\mathbf{x}a}^{\phantom{x}\,i}&:=\Ev^{(3)}_{\;\,\mathbf{x}}\circ  \mathrm{proj}_{Aa}^{\phantom{A}\,i}\,,\\
  \hat{\mathfrak{E}}_{\mathbf{x}\,i}^{\phantom{x}a}&:=\Ev^{(3)}_{\;\,\mathbf{x}}\circ  \mathrm{proj}_{E\,i}^{\phantom{E}a}\,.
\end{align*}
The basic Poisson brackets of $ \mathcal{A}_{\mathbf{x}a}^{\phantom{x}\,i}$ and $\hat{\mathfrak{E}}_{\mathbf{x}\,i}^{\phantom{x}a}$ are
\begin{align}
  \{\mathcal{A}_{\mathbf{x}a}^{\phantom{x}\,i},\mathcal{A}_{\mathbf{y}b}^{\phantom{y}\,j}\} &=0\,, \label{basic_PB_3dim_AA}\\
  \{\hat{\mathfrak{E}}_{\mathbf{x}\,i}^{\phantom{x}a}\,\,,\,\hat{\mathfrak{E}}_{\mathbf{y}\,j}^{\phantom{y}b}\} &=0\,, \label{basic_PB_3dim_EE}\\
  \{\mathcal{A}_{\mathbf{x}a}^{\phantom{x}\,i},\,\hat{\mathfrak{E}}_{\mathbf{y}\,j}^{\phantom{y}b}\} &=\delta_a^b\delta_j^i\,\mathcal{E}^{(3)}_{\;\,\mathrm{\mathbf{x}}}(\mathbf{y})\,.\label{basic_PB_3dim_AE}
\end{align}
These Poisson brackets can be easily computed by using the approach described in Section \ref{sec_one_dim_examples}.

\medskip

In order to define the flux variables we fix a smooth compact surface $S$ embedded in $\mathbb{R}^3$ and test fields $f^i\in \mathbbm{H}^2(\mathbb{R}^3)$. We then introduce the functional
\begin{equation}\label{flux}
^2E_S[f]:\mathbbm{H}^2(\mathbb{R}^3)^9\times \mathbbm{H}^2(\mathbb{R}^3)^9\rightarrow \mathbb{R}:(A,\hat{E})\mapsto \int_S f^i\,E_i\,,
\end{equation}
where the 2-forms $E_i$ are such that the vector fields $\hat{E}^a_i(\mathbf{x})=\hat{\mathfrak{E}}_{\mathbf{x}\,i}^{\phantom{x}a}(A,\hat{E})$ are defined by the dual objects
\begin{align*}
\left(\frac{\cdot\wedge E_i}{v}\right)
\end{align*}
(see \ref{app_3} for details). 

In analogy with the one dimensional examples discussed in Section \ref{sec_one_dim_examples} we write the phase space functions $^2E_S[f]$ in terms of the evaluations $\hat{\mathfrak{E}}_{\mathbf{x}\,i}^{\phantom{x}a}$
\begin{equation}\label{flux2}
^2E_S[f]=\int_S f^i\, \imath_{\hat{\mathfrak{E}}_i}\!v\,,
\end{equation}
In order to conform with the standard notations and facilitate the comparison of our computations with the standard ones we will write \eqref{flux2} in the form
\begin{equation}\label{flux_3}
^2E_S[f]=\int_{\mathbf{x}\in S} \mathrm{d} S^{bc}(\mathbf{x})f^i(\mathbf{x})\hat{\mathfrak{E}}_{\mathbf{x}i}^a v_{abc}(\mathbf{x})\,,
\end{equation}
where, as commented in \ref{app_3}, the volume form $v$ is independent of the canonical variables. 

The holonomies associated with a piecewise smooth curve $\gamma:[0,1]\rightarrow \mathbb{R}^3$ can also be defined in terms of evaluations in the usual way involving a path-ordered exponential:
\begin{equation}\label{holonomy1}
  h_\gamma:\mathbbm{H}^2(\mathbb{R}^3)^9\times \mathbbm{H}^2(\mathbb{R}^3)^9\rightarrow \mathbb{R}:(A,\hat{E})\mapsto \mathcal{P}\exp \left(\int_\gamma \mathcal{A}\right)\,.
\end{equation}
The holonomies defined on the restriction of $\gamma$ to the interval $(\lambda_1,\lambda_2)$ with $0\leq \lambda_1<\lambda_2\leq 1$ will be denoted as $h_\gamma(\lambda_2,\lambda_1)$. They can be written explicitly in the form
\begin{align}
\nonumber
h_\gamma(\lambda_2,\lambda_1)&=\mathbbm{1} +\!\sum_{n=1}^\infty\! \int_{\lambda_1}^{\lambda_2}\!\!\!\!\mathrm{d}t_1\!\int_{\lambda_1}^{t_1}\!\!\!\!\!\mathrm{d}t_2\,\sdots\int_{\lambda_1}^{t_{n-1}}\!\!\!\!\!\!\!\!\!\!\mathrm{d}t_n  \dot{\gamma}^{a_1}(t_1)\mathcal{A}_{\bm{\gamma}(t_1\!)\,a_1}^{\phantom{\bm{\gamma}(t_1)}\!\!i_1}\! \frac{\tau_{i_1}}{2} \sdots \dot{\gamma}^{a_n}(t_n)\mathcal{A}_{\bm{\gamma}(t_n\!)\,a_n}^{\phantom{\bm{\gamma}(t_n)}\!\!i_n}\!\frac{\tau_{i_n}}{2} \\\label{holonomy2}
&= \mathbbm{1} +\!\sum_{n=1}^\infty\! \int_{\lambda_1}^{\lambda_2}\!\!\!\!\mathrm{d}t_1\!\int_{\lambda_1}^{t_1}\!\!\!\!\!\mathrm{d}t_2\,\sdots\int_{\lambda_1}^{t_{n-1}}\!\!\!\!\!\!\!\!\!\!\mathrm{d}t_n  \, A_\gamma \left( t_1 \right)  \sdots A_\gamma \left(  t_n \right)
\end{align}
where $\tau_j=  \mathrm{i} \sigma_j$, the $\sigma_j$ are the Pauli matrices, $t_0=1$, and 
\[A_\gamma \left( t_j \right):= \dot{\gamma}^{a_j}(t_j) \mathcal{A}_{\bm{\gamma}(t_j\!)\,a_j}^{\phantom{\bm{\gamma}(t_j)}\!\!i_j}\! \frac{\tau_{i_j}}{2} \,.\]  
It is possible to prove that the Dyson series \eqref{holonomy2} converges \cite{Baez}.  The variables used in LQG are traces of holonomies along closed curves $\mathrm{Tr}\,h_\gamma(1,0)=:\mathrm{Tr}\,h_\gamma$. 

As the previous expressions are written in terms of the evaluations $\hat{\mathfrak{E}}_{\mathbf{x}\,i}^{\phantom{x}a}$ and $\mathcal{A}_{\mathbf{x}a}^{\phantom{x}\,i}$, the computation of Poisson brackets involving these variables is straightforward and can be carried out by following essentially the same steps as in the standard computations \cite{Thiemann1} (for other approaches see \cite{Lew1,GROSS19851}). Notice in particular that we will always have
\begin{align*}
\big\{^2E_{S_1}[f_1],^2E_{S_2}[f_2]\big\}=0\,,\quad \big\{\mathrm{Tr}\,h_{\gamma_1},\mathrm{Tr}\,h_{\gamma_2} \big\}=0\,,
\end{align*}
for smooth surfaces $S_1,S_2$, test functions $f_1,f_2$, and piecewise smooth curves $\gamma_1,\gamma_2$.
\begin{proposition}\label{proppbhf}
The Poisson bracket of an holonomy and a flux is given by
\begin{align*}
\big\{h_\gamma(\lambda_2,\lambda_1),^2\!E_S[f] \big\}\!=\!\!\int_{x\in S}\!\!\!\!\!\!\mathrm{d} S^{bc}(\mathbf{x})f^k(\mathbf{x})v_{abc}(\mathbf{x})\!\int_{\lambda_1}^{\lambda_2}\!\!\!\mathrm{d}t\dot{\gamma}^a(t) \mathcal{E}^{(3)}_{\;\,\gamma(t)}(\mathbf{x})h_\gamma(\lambda_2,t)\frac{\tau_k}{2}h_\gamma(t,\lambda_1 )\,.
\end{align*}
\end{proposition}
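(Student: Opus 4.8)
The plan is to exploit that $^2E_S[f]$ depends only on the momentum evaluations $\hat{\mathfrak{E}}_{\mathbf{x}\,i}^{\phantom{x}a}$, and linearly at that, so that the operator $\{\,\cdot\,,{}^2E_S[f]\}$ acts on anything built out of the connection evaluations $\mathcal{A}_{\mathbf{x}a}^{\phantom{x}i}$ as a derivation whose only nontrivial ingredient is the basic bracket \eqref{basic_PB_3dim_AE}. Before invoking this I would check that the bracket is defined at all, i.e. that the (complex) matrix entries of $h_\gamma(\lambda_2,\lambda_1)$ are Fr\'echet-differentiable functions on the phase space $\mathbbm{H}^2(\mathbb{R}^3)^9\times\mathbbm{H}^2(\mathbb{R}^3)^9$. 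This follows because each $\mathcal{A}_{\mathbf{x}a}^{\phantom{x}i}$ is linear and continuous, hence differentiable; finite products $A_\gamma(t_1)\cdots A_\gamma(t_n)$ are differentiable by the Leibniz rule; and the Dyson series \eqref{holonomy2}, which converges by \cite{Baez}, converges together with its termwise differentials uniformly on bounded subsets of phase space (the estimates are the same exponential bounds that give convergence of the series itself), so the sum is differentiable. One works entrywise, extending the Poisson bracket $\mathbb{C}$-bilinearly.

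Next I would compute term by term. By $\mathbb{C}$-bilinearity the operator $\{\,\cdot\,,{}^2E_S[f]\}$ passes inside the sum and the nested integrals of \eqref{holonomy2}, and on each product the Leibniz rule for Poisson brackets gives
\[
\{A_\gamma(t_1)\cdots A_\gamma(t_n),{}^2E_S[f]\}=\sum_{m=1}^{n}A_\gamma(t_1)\cdots\{A_\gamma(t_m),{}^2E_S[f]\}\cdots A_\gamma(t_n)\,.
\]
For the elementary bracket, the linearity of \eqref{flux_3} in the $\hat{\mathfrak{E}}$'s, the vanishing of $\{\mathcal{A},\mathcal{A}\}$, and \eqref{basic_PB_3dim_AE} yield
\[
\{A_\gamma(t),{}^2E_S[f]\}=\dot{\gamma}^a(t)\,\frac{\tau_k}{2}\int_{\mathbf{x}\in S}\mathrm{d}S^{bc}(\mathbf{x})\,f^k(\mathbf{x})\,v_{abc}(\mathbf{x})\,\mathcal{E}^{(3)}_{\;\,\gamma(t)}(\mathbf{x})\,,
\]
where I used the basic bracket \eqref{basic_PB_3dim_AE} together with the contraction of the $\hat{\mathfrak{E}}$-indices against $f^k$ and $v_{abc}$ in \eqref{flux_3}.

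Finally I would resum. Substituting the last two displays into \eqref{holonomy2}, in the $m$-th summand the block $A_\gamma(t_1)\cdots A_\gamma(t_{m-1})$ integrated over $\lambda_2\ge t_1\ge\cdots\ge t_{m-1}\ge t_m$ reassembles into $h_\gamma(\lambda_2,t_m)$, while the block $A_\gamma(t_{m+1})\cdots A_\gamma(t_n)$ over $t_m\ge t_{m+1}\ge\cdots\ge t_n\ge\lambda_1$ reassembles into $h_\gamma(t_m,\lambda_1)$; summing over $n$ and $m$, renaming $t_m=t$, and exchanging the $t$-integral with the surface integral (Fubini, justified by absolute convergence) reproduces exactly the stated formula. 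Equivalently, and more economically, one can appeal to the transport equation $\partial_{\lambda_2}h_\gamma(\lambda_2,\lambda_1)=A_\gamma(\lambda_2)h_\gamma(\lambda_2,\lambda_1)$ with $h_\gamma(\lambda_1,\lambda_1)=\mathbbm{1}$: then $B(\lambda_2):=\{h_\gamma(\lambda_2,\lambda_1),{}^2E_S[f]\}$ solves the linear inhomogeneous ODE $\partial_{\lambda_2}B=A_\gamma(\lambda_2)B+\{A_\gamma(\lambda_2),{}^2E_S[f]\}\,h_\gamma(\lambda_2,\lambda_1)$ with $B(\lambda_1)=0$, whose variation-of-parameters solution $B(\lambda_2)=\int_{\lambda_1}^{\lambda_2}h_\gamma(\lambda_2,t)\,\{A_\gamma(t),{}^2E_S[f]\}\,h_\gamma(t,\lambda_1)\,\mathrm{d}t$ is the claim.

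The main obstacle is the analytic bookkeeping of these interchanges rather than any algebra: termwise Fr\'echet-differentiability of the Dyson series (uniform convergence of the differentiated series on bounded sets), commuting $\{\,\cdot\,,{}^2E_S[f]\}$ with the infinite sum and the iterated curve integrals, and Fubini for the curve--surface integral swap. Everything else is the standard manipulation of path-ordered exponentials together with the basic brackets \eqref{basic_PB_3dim_AE}, with only routine care needed to track the matrix ordering and the internal versus spatial index contractions.
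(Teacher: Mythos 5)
Your argument is correct and follows essentially the same route as the paper: expand the Dyson series \eqref{holonomy2}, apply the Leibniz rule termwise, compute the elementary bracket $\{A_\gamma(t),{}^2E_S[f]\}$ from \eqref{basic_PB_3dim_AE} and \eqref{flux_3}, and resum the two blocks into $h_\gamma(\lambda_2,t)\frac{\tau_k}{2}h_\gamma(t,\lambda_1)$ by interchanging the nested integrals (the paper does this via an induction identity on iterated integrals rather than your phrasing, and does not spell out the Fr\'echet-differentiability of the series, which you add). Your alternative variation-of-parameters argument via the transport equation is a nice, more economical bonus not present in the paper, but the core proof coincides.
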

\begin{proof}

To prove this result we use the definitions of the holonomy \eqref{holonomy2}, of the flux \eqref{flux_3}, and the basic Poisson brackets \eqref{basic_PB_3dim_AE}
\begin{align*}
  &\big\{h_\gamma(\lambda_2,\lambda_1),^2\!E_S[f] \big\}\\
  &= \int_{\lambda_1}^{\lambda_2}\mathrm{d}t_1\{A_\gamma(t_1),^2E_S[f]\}\\
  &+ \sum_{n=2}^{\infty}\int_{\lambda_1}^{\lambda_2}\mathrm{d}t_1\int_{\lambda_1}^{t_1}\mathrm{d}t_2\cdots\int_{\lambda_1}^{t_{n-1}}\mathrm{d}t_n\{A_\gamma(t_1),^2E_S[f]\}A_\gamma(t_2)\cdots A_\gamma(t_n)\\
  &+\sum_{n=2}^{\infty}\int_{\lambda_1}^{\lambda_2}\mathrm{d}t_1\int_{\lambda_1}^{t_1}\mathrm{d}t_2\cdots\int_{\lambda_1}^{t_{n-1}}\mathrm{d}t_n A_\gamma(t_1)\cdots A_\gamma(t_{n-1})\{A_\gamma(t_n),^2E_S[f]\}\\
  &+ \sum_{n=3}^{\infty}\sum_{k=2}^{n-1}\int_{\lambda_1}^{\lambda_2}\mathrm{d}t_1\int_{\lambda_1}^{t_1}\mathrm{d}t_2\cdots\int_{\lambda_1}^{t_{n-1}}\mathrm{d}t_n
  A_\gamma(t_1)\cdots  A_\gamma(t_{k-1})\\
  &\hspace*{7.2cm}\times\{A_\gamma(t_k),^2E_S[f]\}A_\gamma(t_{k+1})\cdots A_\gamma(t_n)\,,
\end{align*}
where
\[
\{A_\gamma(t),^2 E_S[f]\}=\frac{\tau_k}{2}\,\dot{\gamma}^a(t)\int_{\mathbf{x} \in S}\mathrm{d} S^{bc}(\mathbf{x})f^k(\mathbf{x})\mathcal{E}^{(3)}_{\gamma(t)}(\mathbf{x})v_{abc}(\mathbf{x})\,.
\]
As it can be seen, we have split the computation into several pieces because it is convenient to separately deal with the terms in which the Poisson brackets $\{A_\gamma(t_k),^2E_S[f]\}$ appear either at the beginning or the end from those where this Poisson bracket is embedded in an expression of the form $A_\gamma(t_1)\cdots  A_\gamma(t_{k-1})\{A_\gamma(t_k),^2E_S[f]\}A_\gamma(t_{k+1})\cdots A_\gamma(t_n)$.

By using
\[
\int_{\lambda_1}^{\lambda_2}\!\!\!\mathrm{d}t_1\int_{\lambda_1}^{t_1}\!\!\!\mathrm{d}t_2\sdots \int_{\lambda_1}^{t_{k-1}}\!\!\!\!\!\!\!\!\mathrm{d}t_k\int_{\lambda_1}^{t_k}\!\!\!\mathrm{d}t \,f(t_1\ldots,t_k,t)=
\int_{\lambda_1}^{\lambda_2}\!\!\!\mathrm{d}t\int_{t}^{\lambda_2}\!\!\!\mathrm{d}t_1\int_{t}^{t_1}\!\!\!\mathrm{d}t_2\sdots \int_{t}^{t_{k-1}}\!\!\!\!\!\!\!\!\mathrm{d}t_kf(t_1\ldots,t_k,t)
\]
(which can be easily proven by induction and obtained by interchanging contiguous integrals two at a time), renaming integration variables as needed, and employing some elementary identities for double sums, it is straightforward to arrive at the desired result.
\end{proof}

Using the proposition \ref{proppbhf}, we have the following 
\begin{coro}
The Poisson bracket of the trace of an holonomy and a flux is given by
\begin{align*}
\big\{\mathrm{Tr}\,h_{\gamma}, ^2\!E_S[f]  \big\}&\!=\!\int_{\mathbf{x}\in S} \!\!\!\!\!\!\mathrm{d} S^{bc}(\mathbf{x}) \, f^i(\mathbf{x}) v_{abc}(\mathbf{x}) \int_0^1\!\!\!\mathrm{d} t \, \dot{\gamma}^{a} (t) \mathcal{E}^{(3)}_{\;\,\gamma(t)}(\mathbf{x})  \mathrm{Tr}\,\Big(h_{\gamma} (1, t) \frac{\tau_i}{2}  h_{\gamma} (t, 0)\Big) \,.
\end{align*}
\end{coro}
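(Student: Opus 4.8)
The plan is to obtain the Corollary as an immediate consequence of Proposition \ref{proppbhf} by taking the trace of the matrix-valued identity proved there. First I would record that, for a fixed piecewise smooth closed curve $\gamma$, the object $\mathrm{Tr}\,h_\gamma=\mathrm{Tr}\,h_\gamma(1,0)$ is a differentiable phase space function: the matrix entries of $h_\gamma(1,0)$ are differentiable functions of $(A,\hat{E})$ (this is precisely what makes the left hand side of Proposition \ref{proppbhf} meaningful, and rests on the convergence of the Dyson series \eqref{holonomy2}, see \cite{Baez}), while $\mathrm{Tr}$ is a linear and continuous map on the finite-dimensional space of $2\times2$ matrices, hence equal to its own differential. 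Therefore $\{\mathrm{Tr}\,h_\gamma,{}^2E_S[f]\}$ is well defined.

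The one step that deserves a line of justification is the interchange of the trace with the Poisson bracket. Writing $h_\gamma(1,0)$ with explicit matrix indices, each entry $\big(h_\gamma(1,0)\big)_{AB}$ is a differentiable phase space function and $\{h_\gamma(1,0),{}^2E_S[f]\}$ is understood entrywise. Since the Poisson bracket is linear in its first slot and $\mathrm{Tr}\,h_\gamma=\sum_A\big(h_\gamma(1,0)\big)_{AA}$, one gets $\{\mathrm{Tr}\,h_\gamma,{}^2E_S[f]\}=\sum_A\{\big(h_\gamma(1,0)\big)_{AA},{}^2E_S[f]\}=\mathrm{Tr}\,\{h_\gamma(1,0),{}^2E_S[f]\}$. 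With this in hand I would simply specialize Proposition \ref{proppbhf} to $\lambda_1=0$, $\lambda_2=1$, take the trace of both sides, use linearity and continuity of $\mathrm{Tr}$ to move it inside the (convergent) parameter integral over $t$ and inside the surface integral over $S$, relabel the internal index $k\to i$, and read off the stated formula, whose integrand is $\mathrm{Tr}\big(h_\gamma(1,t)\frac{\tau_i}{2}h_\gamma(t,0)\big)$.

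I do not expect a genuine obstacle: the only substantive content has already been established in Proposition \ref{proppbhf} (including the convergence issues), and what remains is the passage of the trace through the Poisson bracket and the integrals, which is pure linearity and continuity. If anything, the mildly delicate point is bookkeeping — making sure that the matrix ordering $h_\gamma(1,t)\,(\tau_i/2)\,h_\gamma(t,0)$ is preserved under the trace (no cyclic rearrangement is needed or performed) and that the dummy indices on $S$ and on $\gamma$ are not confused.
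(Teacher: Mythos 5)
Your proposal is correct and follows essentially the same route as the paper, which obtains the corollary directly from Proposition \ref{proppbhf} by specializing to $\lambda_1=0$, $\lambda_2=1$ and taking the trace. The extra details you supply (trace--Poisson-bracket interchange by entrywise linearity, moving $\mathrm{Tr}$ through the convergent integrals) are exactly the routine steps the paper leaves implicit.
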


By using the results obtained so far we are now in the position to explicitly check that the Jacobi identity is satisfied. Let us write
\begin{align}
\label{jacobi}
\begin{split}
      \mathfrak{J} := \Big\{  \big\{ \mathrm{Tr} \,h_{\gamma}, \ ^2E_S[f]  \big\}, \ ^2E_S[g]\Big\}&+\Big\{  \big\{^2E_S[f], ^2E_S[g]   \big\}, \mathrm{Tr} \, h_{\gamma} \Big\} \\
  &+  \Big\{ \big\{^2E_S[g],\mathrm{Tr} \, h_{\gamma}  \big\}, ^2E_S[f] \Big\} \ .
\end{split}
\end{align}
It will be enough to compute just one of the terms
\begin{align*}
    \Big\{  &\big\{ \mathrm{Tr} \, h_{\gamma}, \ ^2E_S[f]  \big\}, \ ^2E_S[g]\Big\} \\
    &= \int_{\mathbf{x}\in S} \!\!\!\mathrm{d} S^{bc}(\mathbf{x}) \, f^i(\mathbf{x}) v_{abc}(\mathbf{x})\int_0^1\!\!\!\mathrm{d} t \, \dot{\gamma}^{a} (t) \mathcal{E}^{(3)}_{\;\,\gamma(t)}(\mathbf{x})  \mathrm{Tr}\big(\big\{ h_{\gamma} (1, t),  ^2E_S[g] \big\}  \frac{\tau_i}{2}  h_{\gamma} (t, 0)\big) \\
    &+ \int_{\mathbf{x}\in S} \!\!\!\mathrm{d} S^{bc}(\mathbf{x}) \, f^i(\mathbf{x}) v_{abc}(\mathbf{x})\int_0^1\!\!\!\mathrm{d} t \, \dot{\gamma}^{a} (t) \mathcal{E}^{(3)}_{\;\,\gamma(t)}(\mathbf{x}) \mathrm{Tr}\,\big(h_{\gamma} (1, t)  \frac{\tau_i}{2} \big\{ h_{\gamma} (t, 0) ,  ^2E_S[g] \big\}\big) \\
    &= \int_{\mathbf{y}\in S} \!\!\!\mathrm{d} S^{ef}(\mathbf{y}) g^j(\mathbf{y})v_{def}(\mathbf{y}) \int_{\mathbf{x}\in S} \!\!\!\mathrm{d} S^{bc}(\mathbf{x}) \, f^i(\mathbf{x}) v_{abc}(\mathbf{x}) \\
    & \hspace{30bp} \times \int_0^1\!\!\!\mathrm{d} t \int_t^1\!\!\!\mathrm{d} s \, \dot{\gamma}^{a}(t) \dot{\gamma}^{d} (s) \mathcal{E}^{(3)}_{\;\,\gamma(t)}(\mathbf{x}) \mathcal{E}^{(3)}_{\;\,\gamma(s)}(\mathbf{y})   \mathrm{Tr} \big( h_{\gamma} (1, s) \frac{\tau_j}{2}  h_{\gamma} (s,t) \frac{\tau_i}{2}  h_{\gamma} (t, 0) \big)\\
    &+\int_{\mathbf{y}\in S} \!\!\!\mathrm{d} S^{ef}(\mathbf{y}) g^j(\mathbf{y})v_{def}(\mathbf{y}) \int_{\mathbf{x}\in S} \!\!\!\mathrm{d} S^{bc}(\mathbf{x}) \, f^i(\mathbf{x}) v_{abc}(\mathbf{x})  \\
    &\hspace{30bp} \times \int_0^1\!\!\!\mathrm{d} t \int_0^t\!\!\!\mathrm{d} s \, \dot{\gamma}^{a}(t) \dot{\gamma}^{d} (s) \mathcal{E}^{(3)}_{\;\,\gamma(t)}(\mathbf{x}) \mathcal{E}^{(3)}_{\;\,\gamma(r)}(\mathbf{y})  \mathrm{Tr} \big( h_{\gamma} (1, t) \frac{\tau_i}{2}  h_{\gamma} (t,s) \frac{\tau_j}{2}  h_{\gamma} (s, 0) \big)\, .
\end{align*}

Note that
\begin{align}
    \int_0^1 \mathrm{d} t  \int_t^1 \mathrm{d} s \,  F(t,s) = \int_0^1 \mathrm{d}s \int_0^s \mathrm{d}t \, F(t,s) \ , \label{fubini1}\\
    \int_0^1 \mathrm{d}t \int_0^t \mathrm{d}s \, F(t,s) = \int_0^1 \mathrm{d}s \int_s^1 \mathrm{d}t  \, F(t,s)\ .\label{fubini2}
\end{align}
By renaming $i \leftrightarrow j$, $\mathbf{x} \leftrightarrow \mathbf{y}$, $t \leftrightarrow s$, $a\leftrightarrow d$, $b\leftrightarrow e$ and $c\leftrightarrow f$ and using equations \eqref{fubini1} and \eqref{fubini2} one obtains

\begin{align*}
    \Big\{  \big\{& \mathrm{Tr} \, h_{\gamma}, \ ^2E_S[f]  \big\}, \ ^2E_S[g]\Big\} \\
    &= \int_{\mathbf{y}\in S} \!\!\!\mathrm{d} S^{ef}(\mathbf{y}) f^j(\mathbf{y})v_{def}(\mathbf{y}) \int_{\mathbf{x}\in S} \!\!\!\mathrm{d} S^{bc}(\mathbf{x}) \, g^i(\mathbf{x}) v_{abc}(\mathbf{x}) \\
    & \hspace{30bp} \times \int_0^1\!\!\!\mathrm{d} t \int_t^1\!\!\!\mathrm{d} r \, \dot{\gamma}^{a}(t) \dot{\gamma}^{d} (r) \mathcal{E}^{(3)}_{\;\,\gamma(t)}(\mathbf{x}) \mathcal{E}^{(3)}_{\;\,\gamma(r)}(\mathbf{y})   \mathrm{Tr} \big( h_{\gamma} (1, r) \frac{\tau_j}{2}  h_{\gamma} (r,t) \frac{\tau_i}{2}  h_{\gamma} (t, 0) \big)\\
    &+\int_{\mathbf{y}\in S} \!\!\!\mathrm{d} S^{ef}(\mathbf{y}) f^j(\mathbf{y})v_{def}(\mathbf{y}) \int_{\mathbf{x}\in S} \!\!\!\mathrm{d} S^{bc}(\mathbf{x}) \, g^i(\mathbf{x}) v_{abc}(\mathbf{x})  \\
    &\hspace{30bp} \times \int_0^1\!\!\!\mathrm{d} t \int_0^t\!\!\!\mathrm{d} r \, \dot{\gamma}^{a}(t) \dot{\gamma}^{d} (r) \mathcal{E}^{(3)}_{\;\,\gamma(t)}(\mathbf{x}) \mathcal{E}^{(3)}_{\;\,\gamma(r)}(\mathbf{y})  \mathrm{Tr} \big( h_{\gamma} (1, t) \frac{\tau_i}{2}  h_{\gamma} (t,r) \frac{\tau_j}{2}  h_{\gamma} (r, 0) \big)\\
    = \Big\{  \big\{& \mathrm{Tr} \, h_{\gamma},\ ^2E_S[g]  \big\}, \ ^2E_S[f]\Big\} \ .
\end{align*}

Since $\Big\{  \big\{^2E_S[f], ^2E_S[g]   \big\}, \mathrm{Tr} \, h_{\gamma} \Big\}  = 0$, equation \eqref{jacobi} reduces to
\begin{align*}
    \mathfrak{J} &= \Big\{  \big\{\mathrm{Tr} \, h_{\gamma}, \ ^2E_S[f]  \big\}, \ ^2E_S[g]\Big\} + \Big\{ \big\{^2E_S[g], \mathrm{Tr} \, h_{\gamma}  \big\}, ^2E_S[f] \Big\} \\
    &=  \Big\{  \big\{ \mathrm{Tr} \,h_{\gamma},\ ^2E_S[g]  \big\}, \ ^2E_S[f]\Big\} + \Big\{ \big\{^2E_S[g],\mathrm{Tr} \, h_{\gamma}  \big\}, ^2E_S[f] \Big\} \\
    &= \Big\{  \big\{\mathrm{Tr} \, h_{\gamma},\ ^2E_S[g]  \big\}, \ ^2E_S[f]\Big\} - \Big\{ \big\{ \mathrm{Tr} \,h_{\gamma}, ^2E_S[g]  \big\}, ^2E_S[f] \Big\}= 0 \ ,
\end{align*}
which was the expected result.

%
%

%
%
\section{Conclusions and comments}{\label{sec_conclusions}}

In this paper we have discussed the computation of Poisson brackets in a specific class of field theories in order to dispel some common misconceptions which originate in the idea (explicitly put forward by Dirac in \cite{Dirac_book}, but probably very old) that the transition from mechanics to field theory can be made by adapting in a straightforward way the usual concepts and formulas of the familiar Hamiltonian treatment of classical mechanics.

In order to convey our message in the clearest way we have employed very well-behaved functional spaces: Hilbert spaces of the Sobolev type. These spaces are helpful for several reasons:
\begin{itemize}
\item When they are used as the configuration manifolds for field theories, the canonical symplectic forms on the associated cotangent bundles are strongly non-degenerate. This means that the computation of the Hamiltonian vector fields defined by suitably regular functions is completely direct. The Poisson brackets are always well defined and well behaved.
\item As they are Banach spaces with the norm associated with the scalar product, the standard results about differentiability hold. This is important, especially for field theories defined on manifolds with boundary because, in that context, several definitions of differentiability can be found in the physics literature.
\item The embedding theorems guarantee that, with a judicious choice of Sobolev space, it is always possible to have sufficiently regular representatives of the basic fields. This makes it possible to define the evaluations of fields and momenta at spatial points and use these objects as the basic canonical variables in a rigorous way.
\item In many standard treatments of field theories and, in particular, when discussing the Hamiltonian formulation, the Poisson brackets of the fields and other objects are written in terms of distributions. Although distributions are, actually, very smooth objects (for instance, they can be differentiated any number of times), the topology of the spaces of distributions is not as simple as that of Banach and Hilbert spaces. Also, they have some unpleasant features, for instance, their multiplication is not always defined.
\end{itemize}

It is not clear to us that physically interesting field theories can be defined in the tangent and cotangent bundles of Sobolev spaces in a straightforward way. As discussed in \cite{Gotaythesis,ChernoffMarsden}, even for such simple examples as the free scalar field, the appropriate setting seems to be that provided by the so called \textit{manifold domains} in which the fibers are different from the configuration space. We nonetheless feel that some of our results on the computation of Poisson brackets in Sobolev spaces provide tantalizing hints about the possibility of avoiding the appearance of divergent quantities and ill defined objects such as the product of delta distributions.

In the context of LQG there have been numerous discussions over the years regarding the apparently counterintuitive non-commutativity of the flux variables. We have several comments on this issue. The first is that the standard flux-holonomy algebra is perfectly well defined and its properties understood in a satisfactory way (see \cite{Thiemann1,Lew1}). We are by no means suggesting here that it should be replaced by a ``naive'' algebra---which would, in fact, mimic the primitive proposals for quantum loop variables. Our point is, only, that it is not possible to justify the necessity of having non Poisson-commuting fluxes by appealing to some purported violation of the Jacobi identity. In fact, this violation originates in the non-(Fr\'echet) differentiability of the fluxes when they are modeled with the help of $L^2$ spaces. If the Poisson brackets can be computed, \textit{the Jacobi identity will always hold}, as we have illustrated with the help of the mock holonomy algebra discussed in Section \ref{sec_mock_HFalgebra}. By using suitable regularizations as in \cite{Thiemann1}, differentiability is restored and, hence, the computations leading to the Jacobi identity can be justified. The regularization process makes use of functions that depend both on the connection and the triad, so the non-commutativity of the fluxes is, actually, natural.

As a last comment, in our opinion, it may be worth studying the possibility of defining physically interesting field theories in the tangent and cotangent bundles of Sobolev Hilbert spaces, such as the ones used in this paper. This would make it possible to take advantage of the mathematical structures available in this setting and the useful properties of the objects that can be defined in these spaces. It is not clear to us that this can be done in the specific case of LQG, but it is an idea worth exploring.

\section*{Acknowledgments}
This work has been supported by the Spanish Ministerio de Ciencia Innovaci\'on y Uni\-ver\-si\-da\-des-Agencia Estatal de Investigaci\'on PID2020-116567GB-C22 grants. E.J.S. Villase\~nor is supported by the Madrid Government (Comunidad de Madrid-Spain) under the Multiannual Agreement with UC3M in the line of Excellence of University Professors (EPUC3M23), and in the context of the V PRICIT (Regional Programme of Research and Technological Innovation). Bogar D\'iaz acknowledges support from the CONEX-Plus programme funded by Universidad Carlos III de Madrid and the European Union's Horizon 2020 research and innovation programme under the Marie Sk{\l}odowska-Curie grant agreement No. 801538.

\appendix

%
%
\section{The Riesz-Fréchet representative of the evaluation}{\label{app_1}}

In this appendix we show how the Riesz-Fréchet representative of the evaluation $\Ev_{\!x}$ can be found. We will concentrate on the evaluation in $H^1(0,1)$, the other examples can be approached in an analogous way.

An orthonormal basis for $H^1(0,1)$ is
\begin{align*}
  & s_0 (t)=\frac{1}{\sqrt{\sinh1}}\sinh\left(t-\frac{1}{2}\right)\,, \\
  & c_0 (t) =1\,, \\
  & s_k (t)=\sqrt{\frac{2}{1+4\pi^2 k^2}}\sin(2\pi k t)\,,\quad k\in\mathbb{N}\,, \\
  & c_k (t)=\sqrt{\frac{2}{1+4\pi^2 k^2}}\cos(2\pi k t)\,,\quad k\in\mathbb{N}\,,
\end{align*}
with $t\in[0,1]$. In terms of this basis, an element $f\in H^1(0,1)$ has an expansion of the form
\begin{align*}
f(x)=&\frac{a_0}{\sqrt{\sinh 1}}\sinh\left(x-\frac{1}{2}\right)+b_0\\
&+\sum_{k=1}^\infty\sqrt{\frac{2}{1+4\pi^2 k^2}}\big(a_k\sin(2\pi k x)+b_k\cos(2\pi k x)\big)\,,
\end{align*}
with $H^1(0,1)$ norm given by
\begin{align*}
\|f\|^2_{H^1}=\sum_{k=0}^\infty(a_k^2+b_k^2)<+\infty\,.
\end{align*}
As a curious side remark, it is interesting to notice that the value of $a_0$ has the following simple expression, depending only on $\tilde{f}(0)$ and $\tilde{f}(1)$,
\begin{align*}
a_0=\sqrt{\frac{1}{2}\coth\frac{1}{2}}\big(\tilde{f}(1)-\tilde{f}(0)\big)\,.
\end{align*}
This means that the elements $f\in H^1(0,1)$ orthogonal to $s_0$ satisfy $\tilde{f}(1)=\tilde{f}(0)$.

According to \eqref{Riesz} the Riesz-Fréchet representative of the evaluation $\Ev_x$, $x\in[0,1]$ is given by
\begin{align*}
\mathcal{E}_x (t)=1+\frac{1}{\sinh1}\sinh\left(x-\frac{1}{2}\right)\sinh\left(t-\frac{1}{2}\right)+2\sum_{k=1}^\infty \frac{\cos\big(2\pi k(x-t)\big)}{1+4\pi^2k^2}\,,
\end{align*}
which can be shown to be equal to \eqref{Riesz_rep_eval}. Its operator norm is
\begin{align*}
\|\Ev_x\|=\|\mathcal{E}_x\|_{H^1}=\left(\frac{e^x+e^{2-x}}{e^2-1}\cosh x\right)^{1/2}\,.
\end{align*}

It is instructive to see now why the evaluation of the derivative is not a continuous linear functional in $H^1(0,1)$ by using \eqref{nec_cond_Riesz}. Indeed, if we denote the evaluation of the derivative as $\Ev_x'$ we have
\begin{align*}
\sum_{n\in\mathbb{N}}\big|\Ev_x'(e_n)\big|^2=\frac{1}{\sinh1}\cosh^2\left(x-\frac{1}{2}\right)+\sum_{k=1}^\infty\frac{8\pi^2k^2}{1+4\pi^2k^2}\,,
\end{align*}
which diverges (here we have denoted the $s_k$ and $c_k$, $k=0,1,\ldots$ collectively as $e_n$). Notice that the series
\begin{align*}
\sum_{n\in\mathbb{N}}\Ev_x'(e_n)e_n(t)=&\frac{1}{\sinh 1}\cosh\left(x-\frac{1}{2}\right)\sinh\left(t-\frac{1}{2}\right)\\
&+\sum_{k=1}^{\infty}\frac{4\pi k}{1+4\pi^2 k^2}\sin\big(2\pi k(t-x)\big)
\end{align*}
converges pointwise (as can be seen by using Dirichlet's convergence criterion) for every $t\in[0,1]$, but its sum has a jump discontinuity at $t=x$. For this reason it is not an element of $H^1(0,1)$.

%
%
\section{Details for the proof of Proposition \ref{11}}{\label{app_2}}

\subsection{The continuous inclusion \texorpdfstring{$H^2(\mathbb{R}^3)\hookrightarrow L^\infty(\mathbb{R}^3)$}{H2includedLinf}}\label{cont_inc}

$C_c^\infty(\mathbb{R}^3)$ is dense in $H^2(\mathbb{R}^3)=W^{2,2}(\mathbb{R}^3)$. This means that given $u\in H^2(\mathbb{R}^3)$ we can find a sequence of smooth functions of compact support $(\varphi_n)_{n\in \mathbb{N}}$, converging to $u$ in $H^2(\mathbb{R}^3)$. Obviously $C_c^\infty(\mathbb{R}^3)\subset  H^2(\mathbb{R}^3)$.

\medskip

As a consequence of Sobolev's embedding theorem we have $\mathbbm{H}^2(\mathbb{R}^3)\hookrightarrow L^\infty(\mathbb{R}^3)$ (i.e. $\mathbbm{H}^2(\mathbb{R}^3)$ is a vector subspace of $L^\infty(\mathbb{R}^3)$ and the inclusion map is continuous). This means that, given an open set $V$ in $L^\infty(\mathbb{R}^3)$, the set $V\cap \mathbbm{H}^2(\mathbb{R}^3)$ is open in $\mathbbm{H}^2(\mathbb{R}^3)$.

Let  $\varepsilon>0$ and $u\in\mathbbm{H}^2(\mathbb{R}^3)$ and consider the open ball $B^\infty(u;\varepsilon)$ in $L^\infty(\mathbb{R}^3)$. As the inclusion of $\mathbbm{H}^2(\mathbb{R}^3)$ in $L^\infty(\mathbb{R}^3)$ is continuous, the intersection $B^\infty(u;\varepsilon)\cap \mathbbm{H}^2(\mathbb{R}^3)$ is an open subset of $\mathbbm{H}^2(\mathbb{R}^3)$. This means that we can find an open ball $B^2(u;\varepsilon')\subset\mathbbm{H}^2(\mathbb{R}^3)$ (defined in terms of $\|\cdot\|_{\mathbbm{H}^2}$), with $\varepsilon'>0$, which is contained in $B^\infty(u;\varepsilon)$. Hence, if a sequence of elements of $\mathbbm{H}^2(\mathbb{R}^3)$ converges to $u\in\mathbbm{H}^2(\mathbb{R}^3)$ in the $\|\cdot\|_{\mathbbm{H}^2}$ norm, it also converges to $u$ in the $\|\cdot\|_\infty$ norm. Indeed, given $\varepsilon>0$ take the ball $B^\infty(u;\varepsilon)$ and find some $B^2(u;\varepsilon')$ as described above. The convergence of the sequence in $\mathbbm{H}^2(\mathbb{R}^3)$ implies the existence of $n_0\in \mathbb{N}$ such that if $n>n_0$ then $u_n\in B^2(u;\varepsilon')$ and, hence, $u_n\in B^\infty(u;\varepsilon)$.

\subsection{Supremum, essential supremum and continuous functions}\label{ess_sup}

Let $U$ be an open subset of $\mathbb{R}^n$ and $f:U\rightarrow \mathbb{R}$ a measurable function (with respect to the Lebesgue measure). For $a\in \mathbb{R}$ we have
\[
f^{-1}(a,\infty)=\{x\in U:f(x)>a\}\,.
\]
Notice that $f^{-1}(a,\infty)=\varnothing$ is equivalent to $a\geq f(x)$ for all $x\in U$, i.e. $a$ is an upper bound of $f(U)$. We define now
\begin{align*}
\sup_{x\in U}f&:=\inf\{a\in \mathbb{R}:f^{-1}(a,\infty)=\varnothing\}\,,\\
\mathrm{ess}\,\sup_{x\in U}f&:=\inf\{a\in \mathbb{R}:\mu(f^{-1}(a,\infty))=0\}\,.
\end{align*}
Notice that $\mu(\varnothing)=0$ implies
\[
\{a\in \mathbb{R}:f^{-1}(a,\infty)=\varnothing\}\subset\{a\in \mathbb{R}:\mu(f^{-1}(a,\infty))=0\}\,,
\]
and, hence,
\[
\mathrm{ess}\,\sup_{x\in U}f\leq \sup_{x\in U}f\,.
\]

\noindent Let us prove now the following

\begin{lemma}
Let $U\subset \mathbb{R}^n$ be open and $f:U\rightarrow \mathbb{R}$ \emph{continuous}, then $\sup_{x\in U}f=\mathrm{ess}\sup_{x\in U}f$.
\end{lemma}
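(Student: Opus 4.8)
The plan is to prove the reverse inequality $\sup_{x\in U}f\le \mathrm{ess}\sup_{x\in U}f$, since the excerpt has already established $\mathrm{ess}\sup_{x\in U}f\le \sup_{x\in U}f$. First I would record the elementary monotonicity fact that the set $\{a\in\mathbb{R}:f^{-1}(a,\infty)=\varnothing\}$ is upward closed: if $f^{-1}(a,\infty)=\varnothing$ and $b>a$, then $f^{-1}(b,\infty)\subseteq f^{-1}(a,\infty)=\varnothing$. Consequently, for every real $a<\sup_{x\in U}f$ the set $f^{-1}(a,\infty)$ is nonempty (otherwise $a$ would lie in that upward-closed set, forcing $\sup_{x\in U}f\le a$).

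The key step is then to use continuity. Since $f$ is continuous and $(a,\infty)$ is open in $\mathbb{R}$, the preimage $f^{-1}(a,\infty)$ is open in $U$, hence open in $\mathbb{R}^n$. A nonempty open subset of $\mathbb{R}^n$ contains a ball and therefore has strictly positive Lebesgue measure. Thus for every $a<\sup_{x\in U}f$ we obtain $\mu(f^{-1}(a,\infty))>0$, which means $a\notin\{a\in\mathbb{R}:\mu(f^{-1}(a,\infty))=0\}$, and hence $a\le\mathrm{ess}\sup_{x\in U}f$. Letting $a\uparrow\sup_{x\in U}f$ yields $\sup_{x\in U}f\le\mathrm{ess}\sup_{x\in U}f$, and combining with the inequality already proven gives the claimed equality.

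I would also dispose of the degenerate cases in a line: if $\sup_{x\in U}f=+\infty$, the argument above shows $\mathrm{ess}\sup_{x\in U}f\ge a$ for every $a\in\mathbb{R}$, so both sides are $+\infty$; and if $U=\varnothing$ both sides equal $-\infty$ (the infimum of all of $\mathbb{R}$). The only genuine use of the hypothesis — and the single place where anything nontrivial happens — is the implication ``nonempty open set $\Rightarrow$ positive Lebesgue measure''; the rest is bookkeeping with the definitions of $\sup$ and $\mathrm{ess}\sup$ stated in the text. I expect no real obstacle here, only the need to be careful that $f^{-1}(a,\infty)$ is both nonempty (from $a<\sup f$) and open (from continuity) before invoking the measure-theoretic fact.
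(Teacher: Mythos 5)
Your proof is correct. The paper establishes the same inequality by contradiction: it picks $a$ strictly between the two quantities, runs an explicit $\varepsilon$--$\delta$ continuity argument at a point $x_0$ with $f(x_0)>a$ to produce a ball $B(x_0;\delta)\cap U$ on which $f>a+\varepsilon$, deduces $\mu(f^{-1}(a+\varepsilon,\infty))>0$, and contradicts $a>\mathrm{ess}\sup_{U}f$. You instead argue directly: for every $a<\sup_{U}f$ the superlevel set $f^{-1}(a,\infty)$ is nonempty and, being the preimage of an open set under a continuous map defined on the open set $U$, is open in $\mathbb{R}^n$, hence of positive Lebesgue measure, so $a\le\mathrm{ess}\sup_{U}f$; letting $a\uparrow\sup_{U}f$ finishes. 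The mathematical content is the same---both proofs hinge on ``nonempty open set $\Rightarrow$ positive measure'' and on the openness of $U$---but your topological phrasing removes the contradiction and the $a+\varepsilon$ shift, and you also settle the degenerate cases $\sup_U f=+\infty$ and $U=\varnothing$, which the paper leaves implicit. The one inference you compress, $\mu(f^{-1}(a,\infty))>0\Rightarrow a\le\mathrm{ess}\sup_{U}f$, rests on the family of measure-zero superlevel sets being upward closed, which follows from the same nesting $f^{-1}(b,\infty)\subseteq f^{-1}(a,\infty)$ for $b>a$ that you recorded at the outset; the paper glosses the identical point, so no genuine gap remains.
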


\begin{proof} We just have to show $\sup_{x\in U}f\leq \mathrm{ess}\,\sup_{x\in U}f$. We will do it by contradiction. Let us then suppose
\[
\sup_{x\in U}f> \mathrm{ess}\,\sup_{x\in U}f\,.
\]
This implies the existence of $a\in \mathbb{R}$ such that $\mathrm{ess}\,\sup_{x\in U}f<\!a<\!\sup_{x\in U}f$. Now, $a\!<\!\sup_{x\in U}f$ implies $f^{-1}(a,\infty)\neq\varnothing$ so there is some $x_0\in U$ satisfying $a<f(x_0)\leq \sup_{x\in U}f$. Let $\varepsilon=(f(x_0)-a)/2$, then, as $f$ is continuous, we can find $\delta>0$ such that, for all $x\in B(x_0;\delta)\cap U$ we have $|f(x)-f(x_0)|<\varepsilon$ and, thus, $f(x)>a+\varepsilon$.

As a consequence, $\mu(f^{-1}(a+\varepsilon,\infty))>0$ (notice that for this to be true it is crucial that $U$ is open). This implies that $a+\varepsilon\!\leq\!\mathrm{ess}\,\sup_{x\in U}f$ [because $\alpha\!>\!\mathrm{ess}\,\sup_{x\in U}f\Rightarrow \mu(f^{-1}(\alpha,\infty))=0$ is equivalent to $\mu(f^{-1}(\alpha,\infty))\!>\!0\Rightarrow \alpha\leq \mathrm{ess}\,\sup_{x\in U}f$] and thus $a<\mathrm{ess}\,\sup_{x\in U}f$, which is  impossible because $a>\mathrm{ess}\,\sup_{x\in U}f$. 
\end{proof}

\noindent \textbf{Remark:} The hypothesis $U$ open is fundamental as shown by the following example: The function
\[
\chi:U\rightarrow \mathbb{R}:x\mapsto x
\]
with $U=\mathbb{Q}\cap(0,1)$ is continuous with the natural topologies and
\[
\sup_{x\in U} \chi =\mathrm{inf}\{a\in\mathbb{R}:f^{-1}(a,\infty)=\varnothing\}=1\,,
\]
but
\[
\mathrm{ess}\sup_{x\in U} \chi =\mathrm{inf}\{a\in\mathbb{R}:\mu\big(f^{-1}(a,\infty)\big)=0\}
\]
does not exist because $\mu(f^{-1}(a,\infty))=0$ for all $a\in\mathbb{R}$.

%
%
\section{Canonical variables in loop quantum gravity}{\label{app_3}}

The symplectic form of LQG can be written in terms of a coframe $e_i$ and an $SO(3)$ connection $A_i$ as
\[
\Omega=\int_\Sigma \dd\left(\epsilon^{ijk}e_j\wedge e_k\right)\ww \dd A_i=:\int_\Sigma \dd E^i\ww \dd A_i
\]
(see \cite{concise}) with $E^i\in\Omega^2(\Sigma)$. In order to get a vector field from $E^i$ we introduce a volume form. Two possible ways to do this are:
\begin{itemize}
\item Introduce a fiducial volume form $v$.
\item Build a volume form $w$ from $E^i$.
\end{itemize}
In both cases we define $\hat{E}^i$ and $\tilde{E}^i$ as the dual-dual objects
\[
\hat{E}^i=\left(\frac{\cdot\wedge E^i}{v}\right)\,,\quad \tilde{E}^i=\left(\frac{\cdot\wedge E^i}{w}\right)\,,
\]
and work with the vector fields canonically defined by them. Here, for a top form $\beta$ and a volume form $v$ we write
\[
\beta=\left(\frac{\beta}{v}\right)v. 
\]
If $\alpha\in\Omega^1(\Sigma)$ these satisfy
\[
\alpha(\hat{E}^i)=\left(\frac{\alpha\wedge E^i}{v}\right)=\imath_{\hat{E}^i}\alpha\,,\quad \alpha(\tilde{E}^i)=\left(\frac{\alpha\wedge E^i}{w}\right)=\imath_{\tilde{E}^i}\alpha\,.
\]
The relationship between these fields is simple. As
\[
\alpha(\tilde{E}_i)=\left(\frac{\alpha\wedge E^i}{w}\right)=\left(\frac{\alpha\wedge E^i}{v}\right)\left(\frac{v}{w}\right)=\alpha(\hat{E}^i)\left(\frac{v}{w}\right)\,,
\]
for any 1-form $\alpha\in\Omega^1(\Sigma)$, we must have
\[
\tilde{E}^i=\hat{E}^i\left(\frac{v}{w}\right)\,.
\]
This implies
\[
\imath_{\tilde{E}^i}w=\imath_{\hat{E}^i}v\,.
\]
As a consequence, all the dependence in the phase space variables can be concentrated on the $\hat{E}^i$ (the volume form is field-independent).

It is interesting to find out the relationship between the divergencies of the vector fields $\hat{E}^i$ and $\tilde{E}^i$:
\[
\mathrm{div}_w\tilde{E}^i=\left(\frac{\mathrm{d}\imath_{\tilde{E}^i}w}{w} \right)=\left(\frac{\mathrm{d}\imath_{\hat{E}^i}v}{w}\right)=\left(\frac{\mathrm{d}\imath_{\hat{E}^i}v}{v}\right)\left(\frac{v}{w}\right)=\mathrm{div}_v\hat{E}^i\left(\frac{v}{w}\right)\,. 
\]
Notice that, as a consequence of this, it is possible to write the constraints in the Ashtekar formulationn in terms of either $\tilde{E}^i$ or $\hat{E}^i$ while keeping their form, for instance, the Gauss law is
\[
0=\mathrm{div}_w\tilde{E}^i+\epsilon^i_{\phantom{i}jk}\imath_{\tilde{E}^k}A^j=\big(\mathrm{div}_v\hat{E}^i+\epsilon^i_{\phantom{i}jk}\imath_{\hat{E}^k}A^j\big)\left(\frac{v}{w}\right)\,.
\]

\section*{References}
\bibliographystyle{iopart-num}
\bibliography{draft}
\end{document}